\newtheorem{theorem}{Theorem}[section]
\newtheorem{corollary}[theorem]{Corollary}
\newtheorem{proposition}[theorem]{Proposition}
\newtheorem{lemma}[theorem]{Lemma}
\newtheorem{remark}[theorem]{Remark}
\numberwithin{equation}{section}
\newcommand{\eat}[1]{}
\begin{document}

\title{Energy-Efficient Scheduling: \\Classification, Bounds, and
  Algorithms}
\author{Pragati Agrawal \hspace{0.7in} Shrisha Rao \\%
        {\tt pragati.a.in@ieee.org} \hspace{0.2in} {\tt shrao@ieee.org} }

\date{}
\maketitle

\section{Introduction}

Energy is a precious resource of the current industrial economy.  For
energy conservation and sustainability, it is necessary to achieve
energy efficiency, which we may take to mean minimization of the
energy consumed to do a given amount of work.  This requires ways to
achieve greater energy efficiencies by proper scheduling of jobs over
machines, in addition to making each machine individually more energy
efficient.  Scheduling theory is an important field of study that has
received attention from researchers for decades, most work on
scheduling has looked at makespan and other time-related
objectives.  Energy being such a limited resource and directly related
to cost, it can in certain instances be considered more important than
time, i.e., it may be better to complete jobs using minimum energy,
rather than completing them quickly. Clearly, scheduling jobs in such
a way that the total energy consumption of the system is minimal, or
at least low, is of the essence.  We present a generalized theory for
energy-efficient offline scheduling in this paper, which does not go
into domain-specific issues or technologies.

Our work is applicable to systems whose machines are similar in their
capabilities but can have different working and idle power
consumptions.  The speeds at which machines can execute tasks can be
different from one another, but are not time-varying, i.e., each
machine operates at the same speed whenever working.  All machines are
connected to one another, so that a job from one machine can be
transferred to any other machine.  There is no set limit to the number
of machines in a system.  Machines and jobs to be executed are
independent, and hence any job can be executed in any order on any
machine.  Each machine is characterized by a fixed working power (the
power consumed by it when under load) and a fixed idle power (the
power consumed by it when running idle), as seen in previous
work~\cite{pagrawal2014}. Given a set of interconnected machines and
independent jobs, our results specify the allocation of jobs on the
machines so that the total power consumption of the system is
minimized.  However, unlike that previous work, we focus on exact
results rather than using heuristics for scheduling.

Considering a system of machines cooperatively running similar jobs,
we classify the problems of energy-minimal offline scheduling with
non-identical interconnected machines and independent jobs as follows:

\begin{enumerate}

 \item \emph{Identical speeds, divisible jobs}:\\ This class of
   problem is discussed in Section~\ref{subsub:id_div}, where we
   derive results to prove that energy-minimal scheduling under this
   type of system can be solved in linear time; we also give an
   algorithm for the same.

 \item \emph{Identical speeds, non-divisible jobs}:\\ This turns out
   to be NP-hard, as shown in Section~\ref{subsub:id_nd}.  We also
   give a linear-time approximation algorithm which gives a solution
   within a bound of $\frac{4}{3}$.

 \item \emph{Different speeds, divisible jobs}:\\ In
   Section~\ref{subsub:diff_div} we give results for energy-minimal
   scheduling on such a system and also a linear-time algorithm to
   find the minimum-energy schedule.

 \item \emph{Different speeds, non-divisible jobs}:\\ As with the
   case of identical speed machines, this class of problems is also
   NP-hard.  We give a linear-time approximation algorithm for this
   class in Section~\ref{subsub:diff_nd} which gives a result within
   the bound of $1+\sqrt{3}/3 \approx 1.5773$ times the optimal.

\end{enumerate}

We first introduce a generic system model which abstracts the relevant
aspects of energy-efficient scheduling.  We then give a classification
of the problems of energy-efficient scheduling on different types of
systems, and study their complexities.  This makes possible analyses
giving results which govern the relative distribution of work among
machines for maximum energy efficiency, given the energy
specifications of the machines.  Using these results we design
scheduling algorithms for different classes of systems.  For problem
classes in which scheduling is NP-hard, we give approximation
algorithms along with their bounds.

We deliberately do not give specific units for energy (which is
typically measured in joules, ergs, kilowatt-hours, etc.), power
(often measured in watts, etc.), work (generally specified in the same
units as energy), or speed (work done per unit time); we likewise do
not find it necessary to be precise about types of energy or energy
sources.  This is so to remain agnostic towards particular system
technologies and instantiations, and to avoid domain-specific biases
in our analyses.

There is a lot of work done for makespan scheduling, and other
time-related objectives.  But solving for time-related objectives does
not guarantee minimum energy.  The novelty of this work is in
considering energy specifically with completely generic machine
specifications.  \eat{ No other theoretical work until now has considered
energy consumption of machines while idle, which is known to be
significant from real life; e.g., in data
centers~\cite{datacenter1,datacenter2}. } Our system model also shows
the absolute bounds of energy-minimal scheduling for specific types of
systems.  We derive results (for divisible jobs) which serve as an
upper-bound to maximum achievable efficiency---because a system with
divisible jobs can always have higher efficiency as compared to system
with non-divisible jobs~\cite{pinedo2012scheduling}.

A part of this theoretical framework, seen here in
Section~\ref{subsec:i_sp}, has been presented in our workshop
paper~\cite{pagrawal2015}, which introduces the classification of
energy-minimal scheduling problems and gives results on the class of
systems which have identical speed machines and divisible jobs.  It
also gives results for a special case of system with different speed
machines and divisible jobs in which idle power consumption of
machines was considered to be zero.  It can be seen as the beginning of
the work presented in the current paper which comprehensively covers
most types of energy-minimal scheduling problems with completely
interconnected machines and independent jobs.

In the present paper we deal with theoretical aspects of
energy-minimal scheduling, deriving results which can guide the design
of scheduling algorithms for systems.  On any given schedule, if the
conditions indicated in the results are satisfied, then that schedule
will be energy-minimal.  Likewise, the results can also indicate when a
system of machines would be inherently wasteful of energy, and can
thus guide more energy-efficient system designs.

It is easy to see (even a back-of-the-envelope calculation can
suffice) that there are many cases where it is not appropriate to run
all machines, if one's goal is to complete a set of jobs with minimal
energy consumption.  The question that arises is: given a set of jobs
to complete, which machines should one use, and which not?  Our
results and analyses address this point.  

Machines not in use are not assumed to be switched off, but do consume
power (what we call idle power).  We have deliberately allowed that
machines do not get switched off when not in use, given the reality of
machines in many domains (after all, the on-off cycle time for an
industrial furnace, or even a personal computer, is not small).  The
situation where a machine does get switched off is merely a special
case where the idle power of that machine is zero.

For the class of system in which machines have identical speeds with
different power ratings, and when the jobs are divisible, we first
determine the precedence of the machines (the order in which they
should be assigned work).  We then derive the optimal number of
machines which should be working so that the energy consumption is
minimum.  Based on this we derive what should be the optimal
allocation of work among the machines chosen to run.  We give an
$\mathcal{O}(m)$-time approximation (where $m$ is the number of machines) 
algorithm which 
instantiates the theoretical
results, to derive the energy-minimal schedule given the
specifications of the machines and jobs.

The slightly harder problem of finding the minimum-energy schedule for
divisible jobs over machines with different speeds and power ratings
can also be solved in $\mathcal{O}(m)$ time.  We give an exact
algorithm for this class too. 

For the classes of problems with non-divisible jobs, we see that the
problems are NP-hard.  Hence we give $\mathcal{O}(m)$-time approximation
algorithms for both cases: systems with identical- and different-speed
machines.

The remainder of the paper is organized as follows. Related work is mentioned 
in Section~\ref{sec:rw}. The system model along with the notation used in 
defining the energy-minimal scheduling problem and the motivation is presented 
in Section~\ref{sec:pf}. Section~\ref{sec:method} discusses the 
precedence of machines for both the identical speed and the different speed 
machine 
case.  The proposed algorithms along with their bounds are given in 
Section~\ref{sec:anc}.  Finally, Section~\ref{sec:im} describes the two types 
of measures commonly used for energy efficiency in systems, and shows that 
these are incompatible.

\section{Related Work}
\label{sec:rw}

In energy-minimal scheduling, as opposed to classical makespan
scheduling, there are additional parameters (energy specifications)
which come into picture and hence it is inherently more complex than
makespan scheduling. Agrawal and Rao show that energy-efficient
scheduling is strictly harder than makespan scheduling, and present a
generic model for scheduling which takes into consideration both the
working power and the idle power of each machine~\cite{pagrawal2014}.  They 
propose three
heuristic algorithms for energy-aware scheduling in a system where
jobs have precedence constraints.

Though there has been a lot of work done in the field of scheduling,
the prime focus of general works on
scheduling~\cite{pinedo2012scheduling,Pinedo2009,Herrmann2006} has
been to optimize objectives related to time---such as makespan,
earliness, and avoidance of tardiness.

Existing literature dealing with scheduling for energy reduction
generally focuses on specific domains, such as communication networks,
embedded systems, and high performance computing.  These works
invariably rely on specific features and technologies of those
domains.  For instance,   Wang and Saksena       address parallel task
scheduling for reducing energy consumption for high-end computing
using the DVFS (dynamic voltage and frequency scheduling) technique
with some heuristics~\cite{wang1999}.
More recently, Yassa \emph{et al.} also base their
approach on DVFS to minimize energy consumption (as do many
others)~\cite{yassa2013}. Huang \emph{et al.} propose energy-aware task 
allocation using
simulated annealing with timing adjustment for network-on-chip based
heterogeneous multiprocessor systems~\cite{huang2011}. Sheikh \emph{et al.} survey
existing scheduling algorithms and software for reducing energy
dissipation in performing tasks on different platforms including
single processors, multicore processors, and distributed 
systems~\cite{Sheikh2012}.
This survey is quite comprehensive and discusses the architectural,
software, and algorithmic issues for energy-aware scheduling of
computing devices.  Pouwelse \emph{et al.} give a heuristic for energy
priority scheduling for variable voltage processors~\cite{Pouwelse01}.
Bambagini \emph{et al.} use the combination of offline-DVFS and online-DPM
(dynamic power management) techniques to reduce energy consumption in
embedded systems~\cite{Bambagini13}.  Some recent
anthologies deal with energy
efficiency in computing systems at various scales (processors to data
centers)~\cite{ZomayaLee2012,pierson2015}.  However, while they are rich in 
pragmatic approaches and
domain-specific ideas for energy efficiency, they do not present a
theory of energy-efficient scheduling \emph{per se}, and only basic
algorithms (e.g., Greedy-Min, Greedy-Max) are considered on occasion.
It is patently obvious that even though there are works on
application-specific energy-efficient scheduling, a generalized theory
is yet far from developed.

In the domain of microprocessor power management and energy saving, much of the 
prior work uses speed scaling and power-down 
techniques~\cite{irani2005,albers2009}. 
 Irani \emph{et al.} have given algorithms for power savings by putting the system on sleep state while idle and varying the speeds at which jobs are run~\cite{irani2007algorithms}.
Augustine \emph{et al.} suggest optimal power-down strategies with more than one low-power state~\cite{augustine2004}.  When a device is 
idle, their algorithm turns the device in low-power sleep state. 
Bansal \emph{et al.} use the other technique called speed 
scaling~\cite{bansal2009,bansal2013}. 
Bansal \emph{et al.} have given a tighter bound for the YDS algorithm given 
by Yao \emph{et al.} and proved that their BKP algorithm is cooling-oblivious~\cite{bansal2007}~\cite{yao1995} . 
In a paper Bansal \emph{et al.} consider an arbitrary power function of 
speed, for speed scaling~\cite{bansal2009,bansal2013}.

There are some studies of scheduling whose theoretical developments
can help in energy-efficient scheduling too.  One such is divisible
job theory, which studies methodologies involving the continuous and
linear modeling of partition-able communication and computation jobs
for parallel
processing~\cite{bharadwaj2003,yu2003,singh2014,abdullah2013,Robertazzi05}.
Another such field is multi-objective scheduling, in which machine
cost is considered as one of the objectives.  (In general, machine cost
includes any cost which is incurred because of usage of the machines of the
system.)

There is some prior work concerning scheduling for multiple criteria.
For instance, Leung \emph{et al.} consider two objectives for
scheduling: one is time-related such as makespan (as the customer's
objective), and the other is total machine cost (as the service
provider's objective)~\cite{leung2012}.  They form different final
objectives by combining these two objectives in various ways and
analyse the complexity of the scheduling problem for each of these
various combinations. Lee \emph{et al.} propose a heuristic for
bi-criteria scheduling with machine assignment costs with worst-case
performance bounds~\cite{lee2014}. Lee \emph{et al.} also work on
multi-objective scheduling\cite{lee2012}.  Other than makespan, they
also minimize the total congestion and completion time, the maximum
and total tardiness, and the number of tardy jobs.  They give a
coordination mechanisms for parallel machine scheduling.  Shi \emph{et
  al.} formulate an assignment scheme for the divisible job theory for
sensing workload allocations~\cite{shi2012}. Drozdowski \emph{et al.}
give a method of visualizing the relationships between computation and
energy consumption (in supercomputing and HPC applications) as
two-dimensional maps similar to isotherms~\cite{drozdowski2014}.

Though the objectives considered in these and other such works can be
seen as marginally relevant to minimizing energy (setting energy
consumed by a running machine as a machine cost), they have not
considered the idle power consumptions of machines, which is needed for
realistic analyses.  In many papers that consider multiple scheduling
objectives, all objectives are concerned with time only, and do not
make sense for energy-minimal scheduling.  The theoretical framework
and classification of energy-minimal scheduling problems is not
discussed in or derivable from any of these papers.

Clearly then, there is a need of a theoretical framework, as presented
here, which realistically models systems of machines considering both
idle power and running power, and also allows us to classify the
various types of problems involved in energy-minimal scheduling and
understand their complexities.  This theory can of course also be
extended for further fine-grained analyses (e.g., for machines with
multiple performance levels).  The theoretical framework proposed is
thus a necessary addition to existing and ongoing work focusing on
heuristic approaches to energy-efficient scheduling, and also
complements technology-specific approaches to energy reduction such as
DVFS~\cite{DVFS}.

\section{System Model} 
\label{sec:pf}

In this section we formally introduce our system model and define the
energy-minimal scheduling problem in the most general sense.  As any
other scheduling problem, we have a set of interconnected machines and
a set of jobs that are to be executed using the machines.  We presume
that all machines are connected in such a way that a job can be
transfered from any machine to any other machine without any energy
dissipation.  Jobs are independent and hence can be executed in any
order and on any machine.

We consider two types of jobs, divisible and non-divisible.  There can
be two types of systems as well, based on the speeds of the machines
in the system: one in which all machines run at the same speed (even
though their power ratings may be different), and the other class in
which machines run at different speeds.  Thus, based on the type of
job and the type of system, we have four classes of scheduling
problems.

Let the set of $m$ machines in a system be denoted by $\mathcal{C}$,
where $\mathcal{C} = \{c_i:1\leq i \leq m \}$ and $c_i$ denotes the
machine $i$ of the set $\mathcal{C}$.  We discuss the precedence of
machines later in this paper. $\mathcal{R}$ denotes the working set of machines.
The working power of machine $c_i$ is
denoted as $\mu(c_i)$, and the idle power as $\gamma(c_i)$.  The sum of
the idle power of all the machines is given by \(\sum_{i=1}^m
\gamma(c_i) = \Gamma\).  The speed of machine $c_i$ is denoted as
$\upsilon(c_i)$.  The speed (throughput of work per unit time) of a
machine is fixed throughout its working tenure.
 
All machines in a system work in parallel, and the maximum working
time of a system to execute a given set $\mathcal{P}$ of $n$ jobs
$p_1$ to $p_n$ is $T$, which is the makespan of the system for that
set of jobs.  If machine $c_i$ works only for time $\tau(c_i)$, then
the idle time of that machine $c_i$ is given by $T-\tau(c_i) =
\kappa(c_i)$.  The amount of work done by machine $c_i$ is represented
by $w(c_i)$, where
\begin{equation}
 w(c_i) = \tau(c_i) \upsilon(c_i)
\label{label:w}
\end{equation}

The weight (processing time) of job $p_j$ is denoted by $\psi(p_j)$ and the sum 
of the
weights of all jobs is given by $\sum_{j=1}^n \psi(p_j) = W$.  Also,
the sum of work done by all machines is equal to the total work to be
done, i.e., $\sum_{i=1}^m w(c_i) = W$.

When $\upsilon(c_i) = 1, \forall i, 1 \leq i \leq m$, then we say that
the machines are identical in their working capacities or speed,
implying that they can execute and complete any job given to them in
equal time.  Also, in this case $\tau(c_i) = w(c_i), \forall i, 1 \leq
i \leq m$.  And if in the present case all jobs are executed
sequentially on one machine, then that time taken is equal to $W$, the
total amount of work to be done.

The energy consumed by a machine can be calculated given the power
consumption by the machine in working state and the duration for which
the machine works.  Since the power rating of a machine is energy
consumed per unit time, it is in principle possible to calculate the
energy consumed by any machine over a given time duration by integrating
the power consumption over time~\cite{halliday2010}.  If the power
consumed is constant over time, the energy is given by the product of
power with time.

In many systems like data centers, it is not a simple matter to shut
off machines, even when they are not given any job.  Thus, if no
machine is switch off until the total work is over, even such an
idle machine consumes some power though not producing any work.
Considering the energy consumption in idle state too, the total energy
consumed by a machine is the sum of the energy consumed in the working
state and that consumed in the idle state.

The energy consumed by machine $c_i$ in the working state is given by
$\mu(c_i){\tau(c_i)}$, and the energy consumed in the idle state by
$\gamma(c_i)\kappa(c_i)$.  So the total energy consumption by a
machine $c_i$ is given by $\mu(c_i)\tau(c_i) +
\gamma(c_i)\kappa(c_i)$.  With this, the expression for the energy
consumed in a system can be given as the sum of the energies consumed
by all $m$ machines:

\begin{equation}
 E = \sum_{i = 1}^{m} [\mu(c_i)\tau(c_i) + \gamma(c_i)\kappa(c_i)]
\label{label:energyk}
\end{equation}
Putting the value of $\kappa(c_i) = T-\tau(c_i)$, in~\eqref{label:energyk},
\begin{equation}
 E = \sum_{i = 1}^{m} [\mu(c_i)\tau(c_i) + \gamma(c_i)(T-\tau(c_i))]
\label{label:energy}
\end{equation}
The general energy equation can be formed by replacing 
$\tau(c_i)$ with $\frac{w(c_i)}{\upsilon(c_i)}$ (from~\eqref{label:w}) 
in~\eqref{label:energy}.
\begin{equation*}
E = \sum_{i = 1}^{m} \left[\mu(c_i)\frac{w(c_i)}{\upsilon(c_i)} + 
\gamma(c_i)(T-\frac{w(c_i)}{\upsilon(c_i)})\right]
\end{equation*}
After simplification,
\begin{equation}
 E = \sum_{i = 1}^{m} \left[\frac{w(c_i)}{\upsilon(c_i)}(\mu(c_i) - 
\gamma(c_i)) + 
\gamma(c_i) T\right]
\label{label:eVarSpeed}
\end{equation}

The notation used is indicated in Table~\ref{tab:notations}.  (Though
some terms are defined later in the paper, we list them here for quick
reference.)

\begin{table}%
\begin{centering}
{%
\begin{tabular}{>{\centering}m{0.27\columnwidth}>{\centering}m{0.64\columnwidth}
}
\toprule
\textbf{Symbol} & \textbf{Meaning} \\
\midrule
$E_{m,r}$ & Total energy consumption of system with $m$ machines and jobs 
distributed to $r$ machines \\ 
$T$ & Makespan of the system \\ 
$\mathcal{C}$ & Set of machines $c_1, c_2, \ldots , c_m$ \\ 
$\mathcal{R}$ & Working set of machines \\ 
$r$ & Number of machines on which jobs are distributed \\ 
$\mu(c_i)$ & Power consumption in working state by machine $c_i$ \\ 
$\gamma(c_i)$ & Power consumption in idle state by machine $c_i$ \\ 
$\upsilon(c_i)$ & Speed of machine $c_i$ \\ 
$\tau(c_i)$ & Time spent in working state by machine $c_i$ \\ 
$\kappa(c_i)$ & Time spent in idle state by machine $c_i$ \\ 
$w(c_i)$ & Amount of work given to machine $c_i$ \\ 
$\mathcal{P}$ & Set of jobs $p_1, p_2, \ldots , p_n$ \\ 
$\psi(p_j)$ & Weight of job $p_j$ \\ 
$W$ & Total working time of all the jobs  \\ 
\bottomrule
\end{tabular}
}
\caption{Notation}
\label{tab:notations}
\end{centering}
\end{table}%

We assume the following in our system:
\begin{enumerate}
\item All jobs are independent, i.e., there are no precedence
  constraints between jobs.  It means a job need not wait for
  completion of any other job to start its execution.
  Hence, there is no idle time (gap) for a machine in between the
  execution of two jobs.  This implies that
\begin{equation}
 T = {\max}_i {\tau(c_i)}
 \label{label:tmax}
\end{equation}
\item In the class where jobs are divisible, the divisibility of jobs
  means, jobs can be arbitrarily divided and assigned to any machine.
\item All machines stay on for the duration of the makespan of the
  whole set of jobs; no machine is switched off while others continue
  to run.  At any instant while the jobs are being executed, each
  machine $c_i$ either consumes its working power $\mu(c_i)$ if
  working, or its idle power $\gamma(c_i)$ if not.  (The situation
  where a machine gets switched off to consume zero power while not
  executing a job, can of course be trivially handled by setting
  $\gamma(c_i)$ to 0.)
\item \label{assum:order}For the model in which the speed of all
  machines is the same, all machines are indexed in increasing order
  of the differences of their working power and idle power, i.e.,
  \(\mu(c_i) - \gamma(c_i) \leq \mu(c_j) - \gamma(c_j), \forall i,j\)
  where \(1\leq i \leq j \leq m\).  For the model where machines can
  have different speeds, the machines are indexed in an order such
  that the first machine is the one with smallest
  \(\frac{\mu(c_i)-\gamma(c_i)+\Gamma}{\upsilon(c_i)}\) and afterwards
  machines are indexed in non-decreasing order of
  $\frac{\mu(c_i)-\gamma(c_i)}{\upsilon(c_i)}$, where \(1\leq i \leq
  m\).
\item For the model in which the speed of all machines is the same,
  one unit of work takes one unit of time for execution, irrespective
  of the machine on which it is executed.

\end{enumerate}

The problem of energy-minimal scheduling refers to finding schedules
for a set of jobs which would require minimum total energy when
executed on a given system:

\begin{enumerate}
 \item \emph{Identical speed machines}:\\ The machines in the system
   have different working and idle power consumptions, but have equal
   speeds in execution of jobs.  The aim for this type of system is to
   minimize $E$ given by~\eqref{label:energy}.
  \item \emph{Different speed machines}:\\ Machines can have different
    power consumptions as well as different speeds.  The aim
    for this type of system is to minimize $E$ given
    by~\eqref{label:eVarSpeed}.
\end{enumerate}

For each of these classes we have two sub-classes, with divisible jobs and
non-divisible jobs.

In any of these classes, the aim is to schedule a set of jobs on the given
set of machines such that the energy consumption of the system $E$
given by~\eqref{label:energy} and~\eqref{label:eVarSpeed} is
minimized.

The minimum makespan scheduling does not guarantee minimal
energy schedules, and in fact makespan scheduling is a special
case of energy-minimal scheduling, where the working power of machines
is equal to their idle power.  Putting $\mu(c_i) = \gamma(c_i)$ in
~\eqref{label:energy}, we get,

\begin{eqnarray*}
  E &=& \sum_{i = 1}^{m} [\mu(c_i)\tau(c_i) + \gamma(c_i)(T-\tau(c_i))]\\
    &=& \sum_{i = 1}^{m} \mu(c_i)T
\end{eqnarray*}
This means energy is directly proportional to makespan, and hence
energy can be minimized by minimizing makespan if the working power is
equal to idle power.

Consider an arguably more practical setting where idle power is
less than the working power, $ \gamma(c_i)= z_i \cdot \mu(c_i)$,
where $0 \leq z_i \leq 1$:
\begin{eqnarray*}
  E &=& \sum_{i = 1}^{m} [\mu(c_i)\tau(c_i) + \gamma(c_i)(T-\tau(c_i))]\\
    &=& \sum_{i = 1}^{m} [\mu(c_i)\tau(c_i) + z_i\mu(c_i)(T-\tau(c_i))]\\
    &=& \sum_{i = 1}^{m} \mu(c_i)[\tau(c_i) + z_iT-z_i\tau(c_i)]
\end{eqnarray*}

Then, the equation of energy can be written as:
\begin{equation}
 E= \sum_{i = 1}^{m} \mu(c_i)[z_iT +\tau(c_i)(1  -z_i)]
 \label{label:powerratio}
\end{equation}

Hence, from~\eqref{label:powerratio} we can see that energy cannot be
minimized by just minimizing makespan, though the energy consumption
of an individual machine does depend upon its working time.  Hence
energy minimization is a more generic problem than makespan
minimization. 

In this section, we have introduced the system model and the
assumptions of the scheduling problem we have considered. We have
also indicated the objective function which we will be looking to
minimize.  We now proceed to solve scheduling problems in the
following section.

\section{Precedence of Machines}
\label{sec:method}

In this section we first find the precedence of machines, i.e., the
order in which they should be assigned work.  This precedence helps in
determining which machines should be alloted work and how much, to
reduce the overall energy consumption of the system.

With differing power specifications of machines, it stands to reason
that it may be advantageous to prefer some machines over others in
doing jobs.  We investigate the properties of the system to determine
the precedence of machines so that the energy consumed in their
execution is minimized.  The problem of finding relative distribution
of work among machines can be broken into two parts:

\begin{itemize}
\item Given the set of $m$ machines, which subset of machines should
  be allowed to work and which should remain idle for all times during
  the makespan?
\item What should be the distribution of work among the working
  machines?
\end{itemize}

There are no restrictions on the jobs except that the total load is
considered to be $W$.  This means that the results derived in this
section are applicable to all type of systems.  We first derive
results for the case where machines have identical speeds and then for
the case where they can have different speeds.

\subsection{Systems with Identical Speed Machines} 
\label{subsec:i_sp}

In this subsection we consider the relative work distribution among
machines with identical speeds such that energy consumption is
minimized.

As indicated previously, we assume that all machines are indexed in
the non-decreasing order of the differences between their working and
idle powers.  We claim that when the machines are indexed in such an
order, then the loading of machines should be such that the working
times of these machines are in non-increasing order.

\begin{lemma}
Given $\mu(c_k) - \gamma(c_k) \leq \mu(c_l) - \gamma(c_l)$, then for energy
optimality of the system, $\tau(c_k) \geq \tau(c_l) \ \forall k, l$, where $1 
\leq k
\leq l \leq m$.
\label{lem1}
\end{lemma}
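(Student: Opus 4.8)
The plan is to prove the statement by an exchange (swap) argument applied to an energy-optimal schedule, after first rewriting the energy functional in a form that isolates the allocation-dependent part. Starting from the energy expression~\eqref{label:energy} and using that identical speeds give $\tau(c_i) = w(c_i)$, I would group terms as
\begin{equation*}
E = \sum_{i=1}^{m} \tau(c_i)\bigl(\mu(c_i) - \gamma(c_i)\bigr) + \Gamma T,
\end{equation*}
where $\Gamma = \sum_{i=1}^{m} \gamma(c_i)$ and $T = \max_i \tau(c_i)$. Writing $d_i := \mu(c_i) - \gamma(c_i)$, the hypothesis becomes $d_k \leq d_l$, and only the first sum couples the working-time allocation to the coefficients $d_i$; the second term depends on the allocation only through the makespan $T$.

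Next I would argue by contradiction. Suppose an energy-optimal schedule assigns $\tau(c_k) = a$ and $\tau(c_l) = b$ with $a < b$, despite $d_k \leq d_l$. I would construct a new schedule by exchanging the entire workloads of machines $k$ and $l$, so that $\tau(c_k) = b$ and $\tau(c_l) = a$ while every other working time is left unchanged. Because the machines run at identical speeds, this swap is always feasible---trivially for divisible jobs, and for non-divisible jobs by transferring each machine's job set to the other, which preserves processing times. The swap leaves the total work $W$ and the entire multiset $\{\tau(c_i)\}$ intact.

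The key observation is that, since the multiset $\{\tau(c_i)\}$ is unchanged, the makespan $T = \max_i \tau(c_i)$ and hence the term $\Gamma T$ are unaffected by the exchange. It therefore suffices to compare the coupling sum, whose change is
\begin{equation*}
\Delta E = \bigl(b\,d_k + a\,d_l\bigr) - \bigl(a\,d_k + b\,d_l\bigr) = (b - a)(d_k - d_l).
\end{equation*}
Since $b > a$ and $d_k \leq d_l$, we get $\Delta E \leq 0$, so the swapped schedule consumes no more energy than the assumed optimum, and strictly less when $d_k < d_l$. This contradicts optimality in the strict case and shows that an optimal schedule may always be taken to satisfy $\tau(c_k) \geq \tau(c_l)$, which proves the claim.

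The main obstacle I anticipate is justifying the invariance of the makespan $T$ under the exchange and confirming that the swap yields a legitimate schedule in the non-divisible setting. Both points rest squarely on the identical-speed assumption, which guarantees $\tau(c_i) = w(c_i)$ and allows job sets to be moved between machines without changing their processing times; once that invariance is secured, the remainder of the argument collapses to the single sign computation above.
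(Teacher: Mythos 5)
Your proposal is correct and takes essentially the same route as the paper: the same decomposition $E = \sum_{i=1}^{m}(\mu(c_i)-\gamma(c_i))\tau(c_i) + \Gamma T$ followed by a contradiction-via-exchange argument, with your sign computation $(b-a)(d_k-d_l)$ being exactly the paper's $(\epsilon_1+\epsilon_2)\bigl[(\mu(c_k)-\gamma(c_k))-(\mu(c_l)-\gamma(c_l))\bigr]$ in the special case $\epsilon_1=\epsilon_2$. If anything, your full-swap variant is slightly tighter, since preserving the multiset $\{\tau(c_i)\}$ rigorously fixes the makespan $T$ (and hence the $\Gamma T$ term), a point the paper's asymmetric perturbation leaves implicit.
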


\begin{proof}
We prove this by contradiction.  Given $\mu(c_k) - \gamma(c_k) \leq \mu(c_l) -
\gamma(c_l)$, then let us say that for minimal energy consumption, $\tau(c_k) <
\tau(c_l)$.

With a little rearrangement of~\eqref{label:energy} we get,
\begin{equation*}
E = \sum_{i = 1}^{m} [(\mu(c_i)-\gamma(c_i))\tau(c_i) + \gamma(c_i)T]
\label{label:energy1}
\end{equation*}

This in turn gives:
\begin{equation}
\begin{split}
E = &\sum_{i = 1}^{k-1} [(\mu(c_i)-\gamma(c_i))\tau(c_i) + \gamma(c_i)T]
	  +[(\mu(c_k)-\gamma(c_k))\tau(c_k) + \gamma(c_k)T] \\
	+&\sum_{i = k+1}^{l-1} [(\mu(c_i)-\gamma(c_i))\tau(c_i) + \gamma(c_i)T]
	  +  [(\mu(c_l)-\gamma(c_l))\tau(c_l) + \gamma(c_l)T] \\
	+ &\sum_{i = l+1}^{m} [(\mu(c_i)-\gamma(c_i))\tau(c_i) + \gamma(c_i)T] 
\end{split}
\label{label:ensp}
\end{equation}

The two possible cases can be: when \(\tau(c_k) \geq \tau(c_l)\), and
when \(\tau(c_k) < \tau(c_l)\). We derive the energy equations for
both cases and then compare them.  By comparing the energy equations
in both, we arrive at a condition under which the energy consumed in
one case is less, and find a contradiction.

Case 1: $\tau(c_k) \geq \tau(c_l)$.  Take $\tau(c_k) = t +\epsilon_1$ and 
$\tau(c_l) = t
-\epsilon_1$, where $\epsilon_1 \geq 0$.

Putting these values in~\eqref{label:ensp} we get,
\begin{equation}
\begin{split}
E = &\sum_{i = 1}^{k-1} [(\mu(c_i)-\gamma(c_i))\tau(c_i) + \gamma(c_i)T]
	+[(\mu(c_k)-\gamma(c_k))(t +\epsilon_1) \\
	+& \gamma(c_k)T] 
	+\sum_{i = k+1}^{l-1} [(\mu(c_i)-\gamma(c_i))\tau(c_i) + \gamma(c_i)T]
	+ \sum_{i = l+1}^{m} [(\mu(c_i)\\
	-&\gamma(c_i))\tau(c_i) + \gamma(c_i)T] 
	+ [(\mu(c_l)-\gamma(c_l))(t -\epsilon_1) + \gamma(c_l)T] 
\end{split}
\label{label:ensp1}
\end{equation}

Case 2: $\tau(c_k) < \tau(c_l)$. Take $\tau(c_k) = t -\epsilon_2$ and 
$\tau(c_l) 
= t +\epsilon_2$,
where $\epsilon_2 > 0$.

Putting these values in~\eqref{label:ensp} we get,
\begin{equation}
\begin{split}
E' = &\sum_{i = 1}^{k-1} [(\mu(c_i)-\gamma(c_i))\tau(c_i) + \gamma(c_i)T]
    + [(\mu(c_k)-\gamma(c_k))(t -\epsilon_2) \\
    +& \gamma(c_k)T] 
   + \sum_{i = k+1}^{l-1} [(\mu(c_i)-\gamma(c_i))\tau(c_i) + \gamma(c_i)T]
    + [(\mu(c_l)\\
    -&\gamma(c_l))(t +\epsilon_2) + \gamma(c_l)T] 
   + \sum_{i = l+1}^{m} [(\mu(c_i)-\gamma(c_i))\tau(c_i) + \gamma(c_i)T] 
\end{split}
\label{label:ensp2}
\end{equation}

Subtracting~\eqref{label:ensp2} from~\eqref{label:ensp1} to compare
energies in both the cases, we get,
\begin{equation}
\begin{split}
E-E' = &[(\mu(c_k)-\gamma(c_k))(t +\epsilon_1) + \gamma(c_k)T]\\
	&- [(\mu(c_k)-\gamma(c_k))(t -\epsilon_2) + \gamma(c_k)T] \\
	 &+ [(\mu(c_l)-\gamma(c_l))(t -\epsilon_1) + \gamma(c_l)T] \\
	 &- [(\mu(c_l)-\gamma(c_l))(t +\epsilon_2) + \gamma(c_l)T]
\end{split}
\label{label:enspdif}
\end{equation}

Simplifying~\eqref{label:enspdif}, we get,

\begin{equation*}
E-E' = (\epsilon_1 + \epsilon_2)[(\mu(c_k)-\gamma(c_k)) - 
(\mu(c_l)-\gamma(c_l))]
\end{equation*}

If we say that in Case 2 the energy consumed is less, this means
\begin{eqnarray*}
 E-E'>0\\
 (\epsilon_1 + \epsilon_2)[(\mu(c_k)-\gamma(c_k)) - (\mu(c_l)-\gamma(c_l))] > 
0\\
\end{eqnarray*}
As $\epsilon_1 > 0$ and $ \epsilon_2 >0$, hence,
\begin{equation}
(\mu(c_k)-\gamma(c_k)) > (\mu(c_l)-\gamma(c_l)) 
\label{label:15} 
\end{equation}

\eqref{label:15} is in contradiction to our assumption.  QED.
\end{proof} 

We show which machines should be given comparatively more work then
others.  There is a special case in which the idle power of a machine
is proportional to its working power.  The following corollary covers
machine precedence in such a case.
\begin{corollary}
When the ratio between the working power consumption ($\mu(c_i)$) and
idle power consumption ($\gamma(c_i)$) of the machine is some constant $z
, \forall i, 1 \leq i \leq m$, then $\mu(c_i) - \gamma(c_i)$ is
proportional to $\mu(c_i)$, so we need to index the machines in the
order of $\mu(c_i)$.
\end{corollary}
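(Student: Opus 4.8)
The plan is to reduce the corollary to a one-line algebraic substitution into the indexing hypothesis of Lemma~\ref{lem1}. First I would make the constant-ratio assumption explicit: if the ratio of idle to working power is a fixed constant $z$ for every machine, then $\gamma(c_i) = z\,\mu(c_i)$ for all $i$ (consistent with the earlier form $\gamma(c_i)=z_i\,\mu(c_i)$ in~\eqref{label:powerratio}, now with a common $z$). Substituting this into the quantity $\mu(c_i)-\gamma(c_i)$ that controls the ordering in Assumption~\ref{assum:order} and in Lemma~\ref{lem1} gives $\mu(c_i)-\gamma(c_i) = (1-z)\,\mu(c_i)$, which is exactly the claimed proportionality with constant $(1-z)$. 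The argument is orientation-independent: had I read the ratio as $\mu/\gamma$ instead, I would obtain $\mu(c_i)-\gamma(c_i)=\tfrac{z-1}{z}\,\mu(c_i)$, still proportional to $\mu(c_i)$.

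Next I would convert this proportionality into an equivalence of orderings. Since $0 \le z \le 1$ in the practical regime where idle power does not exceed working power, the factor $(1-z)$ is nonnegative, so multiplication by it preserves order: $\mu(c_i)-\gamma(c_i) \le \mu(c_j)-\gamma(c_j)$ holds precisely when $\mu(c_i) \le \mu(c_j)$. Hence the non-decreasing ordering of the differences coincides with the non-decreasing ordering of the working powers, and indexing machines by $\mu(c_i)$ is equivalent to the indexing demanded by Lemma~\ref{lem1}. The lemma's conclusion on the working times $\tau(c_i)$ then transfers verbatim, which is exactly the precedence claim of the corollary.

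The only real subtlety, and the one place I would be careful, is the degenerate case $z=1$, where $\mu(c_i)=\gamma(c_i)$ for every machine and all the differences collapse to zero. There the ordering by $\mu(c_i)-\gamma(c_i)$ is vacuous; this is consistent with the earlier observation that $\mu=\gamma$ collapses the problem to pure makespan scheduling, in which $E=\sum_i \mu(c_i)T$ is independent of how the work is split, so ordering by $\mu(c_i)$ costs nothing. For $z<1$ the equivalence of orderings is strict and the precedence by $\mu(c_i)$ is well-defined. I would note this case split briefly so that the ``proportional therefore same order'' step is airtight, but beyond it there is no substantive obstacle—the corollary is a direct specialization of Lemma~\ref{lem1}.
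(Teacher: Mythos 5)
Your proposal is correct and coincides with the paper's own (implicit) justification: the paper states this corollary without a separate proof precisely because it follows from the one-line substitution $\gamma(c_i)=z\,\mu(c_i)$, giving $\mu(c_i)-\gamma(c_i)=(1-z)\,\mu(c_i)$, so that indexing by $\mu(c_i)$ reproduces the indexing required by Lemma~\ref{lem1}. Your added care about the sign of the proportionality constant and the degenerate case $z=1$ goes slightly beyond what the paper records, but does not change the route.
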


For energy optimality it may well be suitable to give work to only
some of the machines while letting others run completely idle.  We now
state the condition showing which machines should be used when using
only a subset of all the machines is beneficial.

\begin{lemma} \label{lem2}
If we give work to only some $r$ machines, where $1 \leq r \leq m$,
then, for reduced energy consumption, these are the $r$ machines that
form the set $\{c_1, c_2, \ldots, c_r\}$ given our index order.
\end{lemma}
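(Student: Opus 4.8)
The plan is to use an exchange (swap) argument. First I would rewrite the energy of an arbitrary schedule using the rearrangement from the proof of Lemma~\ref{lem1}, namely
\begin{equation*}
E = \sum_{i=1}^{m} (\mu(c_i)-\gamma(c_i))\tau(c_i) + \Gamma T,
\end{equation*}
and observe that the idle baseline $\Gamma T$ is paid by \emph{every} machine, whether or not it is assigned work, because all machines stay on for the whole makespan. Since the speeds are identical we have $\tau(c_i)=w(c_i)$, so a machine that receives no work contributes only to the $\Gamma T$ term; the only ``active'' cost is the sum $\sum_i (\mu(c_i)-\gamma(c_i))\tau(c_i)$.

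Next I would argue by contradiction. Suppose an energy-optimal schedule assigns work to exactly $r$ machines forming a set $S \neq \{c_1,\ldots,c_r\}$. Since $|S|=r$ but $S$ differs from the first $r$ indices, there must exist an index $a \in S$ with $a>r$ and an index $b \notin S$ with $b \le r$; hence $b<a$, and by our indexing $\mu(c_b)-\gamma(c_b) \le \mu(c_a)-\gamma(c_a)$. I would then perform the swap that moves \emph{all} of $c_a$'s work onto the currently idle machine $c_b$, setting $\tau(c_b)=w(c_a)$ and $\tau(c_a)=0$ while leaving every other machine untouched.

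The key observation is that this swap leaves the multiset of working times unchanged---$c_b$ simply inherits the working time that $c_a$ had---so the makespan $T=\max_i \tau(c_i)$ is preserved and the term $\Gamma T$ is unaffected. Computing the change in the remaining sum then gives
\begin{equation*}
\Delta E = w(c_a)\big[(\mu(c_b)-\gamma(c_b)) - (\mu(c_a)-\gamma(c_a))\big] \le 0,
\end{equation*}
because the bracket is non-positive and $w(c_a)\ge 0$. Thus the swapped schedule consumes no more energy while agreeing with $\{c_1,\ldots,c_r\}$ on one additional index. Iterating the swap removes every index exceeding $r$ without ever increasing energy, producing a schedule supported on $\{c_1,\ldots,c_r\}$ that is at least as good as the assumed optimum, which establishes the claim.

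I expect the main obstacle to be the makespan bookkeeping: the argument is clean only because I transfer the \emph{entire} load of $c_a$ onto a fully idle machine, which keeps the set of working times---and therefore $T$---exactly fixed, so that the $\Gamma T$ terms cancel in the comparison. A partial transfer, or a swap onto a machine already carrying load, would perturb $\max_i \tau(c_i)$ and require a separate comparison of the idle-energy terms; the care therefore lies in choosing the swap to be load-preserving. The only other point needing attention is the routine combinatorial verification that whenever $S\neq\{c_1,\ldots,c_r\}$ a pair $(a,b)$ with $b<a$, $a\in S$, $b\notin S$ must exist.
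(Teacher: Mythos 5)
Your proof is correct, but it takes a genuinely different route from the paper. The paper derives this lemma as an almost immediate consequence of Lemma~\ref{lem1}: at an energy optimum the working times must be non-increasing in the index order, so if $\tau(c_{r+1})>0$ while only $r$ machines work, some $\tau(c_i)$ with $i\le r$ would have to vanish, contradicting $\tau(c_1)\geq\cdots\geq\tau(c_r)\geq\tau(c_{r+1})>0$. You instead give a self-contained exchange argument on the support set: transfer the \emph{entire} load of a working machine $c_a$ with $a>r$ onto a fully idle machine $c_b$ with $b\le r$, note that this preserves the multiset of working times and hence $T=\max_i\tau(c_i)$, so the $\Gamma T$ terms cancel and $\Delta E=\tau(c_a)\bigl[(\mu(c_b)-\gamma(c_b))-(\mu(c_a)-\gamma(c_a))\bigr]\le 0$, then iterate. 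The paper's route is shorter given that Lemma~\ref{lem1} is already in hand; your route buys independence from Lemma~\ref{lem1} and, notably, handles the makespan bookkeeping more carefully than the paper does anywhere: the $\pm\epsilon$ perturbations in the paper's proof of Lemma~\ref{lem1} could in principle alter $T$ (e.g., when the perturbed machine attains the makespan), a point the paper leaves silent, whereas your full-load swap keeps $T$ exactly fixed by construction. Your pigeonhole step producing the pair $(a,b)$ is also correctly justified: if $|S|=r$ and $S\neq\{1,\ldots,r\}$ then $S\not\subseteq\{1,\ldots,r\}$, giving $a\in S$ with $a>r$ and forcing some $b\le r$ with $b\notin S$. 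One cosmetic remark: your argument is constructive (it shows the prefix set is weakly at least as good as any other $r$-machine support), so the framing as a contradiction is unnecessary; the weak conclusion is exactly what the lemma's statement requires, and with ties in $\mu(c_i)-\gamma(c_i)$ nothing stronger is true.
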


\begin{proof}
We prove this by contradiction. For the $r$ machines working, $\tau(c_i)>0$, 
and for the other $m-r$
machines $\tau(c_i) = 0$.  If $\tau(c_{r+1}) > 0$ then there must be
any $\tau(c_i)$ from the set $\{\tau(c_1),\tau(c_2),\ldots,
\tau(c_r)\}$ which is equal to $0$. But by Lemma~\ref{lem1},
$\tau(c_1) \geq \tau(c_2) \geq \tau(c_3) \geq \ldots \geq \tau(c_r) >
0$ since $\tau(c_{r+1}) > 0$. These two statements are contradictory
and so it is not possible that $\tau(c_{r+1}) > 0$.  Hence it stands
proved that for energy optimality, if we give work to only some $r$
machines where $1 \leq r \leq m$, then these $r$ machines are of the
set $\{c_1, c_2, \ldots, c_r\}$. QED.
\end{proof}

Using Lemma~\ref{lem2}, given some number of machines to be used, we
can find which machines should be assigned jobs.  We now state and
prove the condition which decides how many machines should be used so
that energy consumption of the system is minimal.

Let the energy consumption of a system of $m$ machines, when jobs are 
distributed to $r$ machines, be given by $E_{m,r}$. If we add one more machine 
to the working set of machines, then we take some amount of work from 
previously 
working machine(s) and give that amount of work to the new machine. $s(c_i)$ 
represents the amount of work taken away from machine $c_i$ and given to other 
machine(s).

\begin{theorem}
When $r-1$ machines are working, for machine $c_r$ to be given work
(i.e., $\tau(c_r) \neq 0$) and result in reduced energy consumption,
the following must hold.
\begin{equation}
\sum_{i = 1}^{r-1}[(\mu(c_i) -\gamma(c_i)-\mu(c_r)+\gamma(c_r))s(c_i)] + 
s(c_1)\sum_{i = 1}^{m}
\gamma(c_i) >0
\label{label:th}
\end{equation}
\label{th4}
\end{theorem}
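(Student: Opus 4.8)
The plan is to compute the energy difference $E_{m,r-1} - E_{m,r}$ explicitly and show it is positive precisely when the stated inequality holds. I would start from the rearranged energy expression $E = \sum_{i=1}^m [(\mu(c_i)-\gamma(c_i))\tau(c_i) + \gamma(c_i)T]$ already used in the proof of Lemma~\ref{lem1}. By Lemma~\ref{lem2}, when $r-1$ machines work they are $c_1,\ldots,c_{r-1}$ (with $\tau(c_i)=0$ for $i\geq r$), and by Lemma~\ref{lem1} their working times satisfy $\tau(c_1)\geq \cdots \geq \tau(c_{r-1})>0$, so the makespan is $T=\tau(c_1)$. Collecting the idle-power terms into $\Gamma=\sum_{i=1}^m\gamma(c_i)$, this gives $E_{m,r-1} = \sum_{i=1}^{r-1}(\mu(c_i)-\gamma(c_i))\tau(c_i) + \Gamma\,\tau(c_1)$.

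Next I would model the effect of switching on machine $c_r$. By definition $s(c_i)$ is the work removed from $c_i$ and handed to $c_r$, so the new working times are $\tau(c_i)-s(c_i)$ for $1\leq i\leq r-1$, while $c_r$ receives $\tau(c_r)=\sum_{i=1}^{r-1} s(c_i)$. Substituting into the same energy formula yields $E_{m,r}$. Forming the difference, the working-state contributions regroup (writing $c_r$'s share as $(\mu(c_r)-\gamma(c_r))\sum_{i=1}^{r-1} s(c_i)$) into $\sum_{i=1}^{r-1}[(\mu(c_i)-\gamma(c_i)) - (\mu(c_r)-\gamma(c_r))]\,s(c_i)$, and the idle term leaves $\Gamma\,s(c_1)$. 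Requiring reduced energy, i.e. $E_{m,r-1}-E_{m,r}>0$, then produces exactly inequality~\eqref{label:th}.

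The step I expect to be delicate is pinning down the new makespan. The term $s(c_1)\sum_{i=1}^m\gamma(c_i)$ appears only if the makespan drops by precisely $s(c_1)$, that is, $T$ changes from $\tau(c_1)$ to $\tau(c_1)-s(c_1)$. This is legitimate because $c_1$ has the smallest $\mu-\gamma$, so by Lemma~\ref{lem1} it retains the largest working time after any energy-optimal redistribution; hence $c_1$ dictates the makespan both before and after adding $c_r$, and its working time decreases by exactly the work $s(c_1)$ removed from it. I would make this dependence on Lemma~\ref{lem1} explicit. Since the result is phrased as a condition that ``must hold'' for the switch to help, it suffices to establish the equivalence between the reduction $E_{m,r}<E_{m,r-1}$ and~\eqref{label:th}; the remaining algebra is routine cancellation of the $\tau(c_i)$ and $\Gamma$ terms.
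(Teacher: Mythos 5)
Your proposal is correct and follows essentially the same route as the paper's proof: both compute $E_{m,r-1}-E_{m,r}$ from the rearranged energy expression with makespan $T=\tau(c_1)$ (justified via Lemmas~\ref{lem1} and~\ref{lem2}), use conservation of the total work $W$ to write $\tau(c_r)=\sum_{i=1}^{r-1}s(c_i)$, and reduce the difference to exactly the left-hand side of~\eqref{label:th}. Your explicit justification that the makespan drops by precisely $s(c_1)$ (since $c_1$ retains the largest working time) is the same step the paper handles implicitly by writing the idle terms as $\tau(c_1)'\,\Gamma$ and $\tau(c_1)\,\Gamma$ and taking $s(c_1)=\tau(c_1)'-\tau(c_1)$.
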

 \begin{proof}
We prove this by construction. 
If only $r$ of the $m$ machines are used and the rest
remain idle all the time, \eqref{label:energy} can be re-written
as follows, where the energy is expressed by $E_{m,r}$:
 \begin{equation*}
 \begin{split}
  E_{m,r} = &\sum_{i = 1}^{r-1} [\mu(c_i)\tau(c_i) + \gamma(c_i)(T-\tau(c_i))] 
\\
&+ \mu(c_r)\tau(c_r) + \gamma(c_r)(T-\tau(c_r)) +\sum_{i = r+1}^{m} \gamma(c_i)T
 \end{split}
 \end{equation*}

Putting $T=\tau(c_1)$, [by~\eqref{label:tmax}, Lemma~\ref{lem1} and
  Lemma~\ref{lem2}]
\begin{equation}
 \begin{split}
E_{m,r} = &\sum_{i = 1}^{r-1} [\mu(c_i)\tau(c_i) + 
\gamma(c_i)(\tau(c_1)-\tau(c_i))] \\
&+ \mu(c_r)\tau(c_r) +
\gamma(c_r)(\tau(c_1)-\tau(c_r))
+
\sum_{i = r+1}^{m} \gamma(c_i)\tau(c_1)  
 \end{split}
 \label{eq:ere}
\end{equation}
Re-arranging~\eqref{eq:ere}, 
\begin{equation}
 E_{m,r} = \sum_{i = 1}^{r-1} (\mu(c_i)-\gamma(c_i))\tau(c_i) + (\mu(c_r) - 
\gamma(c_r))\tau(c_r) +\tau(c_1)\sum_{i =1}^{m}\gamma(c_i)
\label{label:trv}
   \end{equation}
   
   Also, we know that,
\begin{equation}
 W = \sum_{i = 1}^{r}\tau(c_i)
 \label{eq:w}
\end{equation}

Taking out the term $r$ from~\eqref{eq:w},
\begin{equation*}
 W = \sum_{i = 1}^{r-1}\tau(c_i) + \tau(c_r) 
\end{equation*}

\begin{equation}
\tau(c_r) = W - \sum_{i = 1}^{r-1}\tau(c_i)
\label{label:trvalue}
\end{equation}

Putting the value of $\tau(c_r)$ from~\eqref{label:trvalue}
in~\eqref{label:trv}, we get,
\begin{equation}
\begin{split}
  E_{m,r} = &\sum_{i = 1}^{r-1} (\mu(c_i)-\gamma(c_i))\tau(c_i) + (\mu(c_r) \\
  &- \gamma(c_r))(W -\sum_{i=1}^{r-1}\tau(c_i)) + \tau(c_1)\sum_{i = 
1}^{m}\gamma(c_i)   
\end{split}
\label{eq:eere}
\end{equation}

Re-arranging~\eqref{eq:eere}, 
\begin{equation}
\begin{split}
 E_{m,r} = &\sum_{i = 1}^{r-1} [(\mu(c_i)-\gamma(c_i) - \mu(c_r) + 
\gamma(c_r))\tau(c_i) ] \\
 &+ (\mu(c_r)-\gamma(c_r))W  +\tau(c_1)\sum_{i = 1}^{m}\gamma(c_i)
\end{split}
 \label{label:enf}
\end{equation}
    
This is a general equation for the energy of a system with $m$
machines, with jobs given to $r$ machines, with $1 \leq r \leq m$ and
$W$ is the total working time to complete all jobs.

If we give jobs to $r-1$ machines in a $m$ machine system, its energy
can be derived by putting $r=r-1$ in~\eqref{label:enf}.  Also the
$\tau(c_i)$s are changed to $\tau(c_i)'$s as in the following:
\begin{equation}
\begin{split}
E_{m,r-1} = &\sum_{i = 1}^{r-2} [\mu(c_i) 
-\gamma(c_i)-\mu(c_{r-1})+\gamma(c_{r-1})]\tau(c_i)'
\\&+
(\mu(c_{r-1}) -
\gamma(c_{r-1}))W 
 + \tau(c_1)'\sum_{i = 1}^{m} \gamma(c_i)
\label{label:enf1}
\end{split}
\end{equation}

It is better that we give work to more machines only when the energy
consumption by doing so is lowered.  Hence, if $E_{m,r-1} > E_{m,r}$,
then only we should give work to $r$ machines, else we give to $r-1$
machines only.  Thus the condition for using $r$ machines can be
written as:
\begin{equation}
 E_{m,r-1} - E_{m,r} > 0 
 \label{label:n10}
\end{equation}

Using~\eqref{label:enf} and~\eqref{label:enf1}, 
\begin{equation}
\begin{split}
 E_{m,r-1} -E_{m,r} =& \sum_{i = 1}^{r-1} [(\mu(c_i)
-\gamma(c_i)-\mu(c_r)+\gamma(c_r))(\tau(c_i)'-\tau(c_i))]\\
& + (\tau(c_1)'-\tau(c_1))\sum_{i = 1}^{m} \gamma(c_i)
\end{split}
  \label{label:n16}
\end{equation}

Taking $\tau(c_i)'-\tau(c_i)=s(c_i)$, we get,
\begin{equation*}
 E_{m,r-1} -E_{m,r} = \sum_{i = 1}^{r-1} [(\mu(c_i) 
-\gamma(c_i)-\mu(c_r)+\gamma(c_r))s(c_i)]
 + s(c_1)\sum_{i = 1}^{m} \gamma(c_i)
  \label{label:n17}
\end{equation*}
Here, $s(c_i)$ informally signifies the amount of work taken away from
machine $c_i$, to facilitate giving work to the machine $c_r$.

To give the jobs to $r$ machines $E_{m,r-1} -E_{m,r} >0$.
\begin{equation*}
 \sum_{i = 1}^{r-1} [(\mu(c_i) -\gamma(c_i)-\mu(c_r)+\gamma(c_r))s(c_i)]
 + s(c_1)\sum_{i = 1}^{m} \gamma(c_i) > 0
 \label{label:th4e}
\end{equation*}
QED.
\end{proof}

Having derived the condition describing which machines to use for
doing jobs, we now find out the amount of
work to be given to each of these machines.  

\begin{theorem}
If we give jobs to $r > 1$ machines (where $r$ is number of working machines), then for minimum energy
consumption, the distribution of jobs on all $r$ machines should be
equal and given by $\frac{W}{r}$.\\
\label{th5}
\end{theorem}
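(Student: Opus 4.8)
The plan is to turn the claim into a local-exchange (perturbation) argument on the simplified energy expression that is already available. First I would use Lemma~\ref{lem1} and Lemma~\ref{lem2} to pin down the shape of any $r$-machine optimum: the active machines are $c_1,\dots,c_r$, their working times obey $\tau(c_1)\ge\tau(c_2)\ge\cdots\ge\tau(c_r)>0$, and hence by~\eqref{label:tmax} the makespan is $T=\tau(c_1)$. Substituting $T=\tau(c_1)$ into~\eqref{label:trv} leaves
\begin{equation*}
E_{m,r}=\sum_{i=1}^{r}\bigl(\mu(c_i)-\gamma(c_i)\bigr)\tau(c_i)+\tau(c_1)\,\Gamma ,\qquad \Gamma=\sum_{i=1}^{m}\gamma(c_i),
\end{equation*}
which is \emph{linear} in the loads $\tau(c_1),\dots,\tau(c_r)$ under the single constraint $\sum_{i=1}^{r}\tau(c_i)=W$. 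The entire argument is then a trade-off between the working term $\sum(\mu-\gamma)\tau$, which wants all load piled onto the cheapest machine $c_1$, and the idle term $\tau(c_1)\Gamma$, which wants $\tau(c_1)$, and hence the whole schedule, as balanced as possible.

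Assume for contradiction that an optimal $r$-machine schedule is not balanced. By the ordering this gives $\tau(c_1)>\tau(c_r)$, so for a small $\delta>0$ I would move $\delta$ units of work from the bottleneck machine $c_1$ to the least-loaded machine $c_r$. For $\delta$ small the ordering survives and $c_1$ stays the bottleneck, so the makespan drops to $\tau(c_1)-\delta$. Tracking the two terms separately, the working energy changes by $\bigl[(\mu(c_r)-\gamma(c_r))-(\mu(c_1)-\gamma(c_1))\bigr]\delta$ and the idle energy by $-\Gamma\delta$, so
\begin{equation*}
\Delta E=\Bigl[(\mu(c_r)-\gamma(c_r))-(\mu(c_1)-\gamma(c_1))-\Gamma\Bigr]\delta .
\end{equation*}
If the bracket is negative this strictly lowers the energy, contradicting optimality; repeating the argument for every unequal pair forces all loads equal, i.e.\ $\tau(c_i)=W/r$.

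The step I expect to be the real obstacle is establishing the sign of that bracket, namely $(\mu(c_r)-\gamma(c_r))-(\mu(c_1)-\gamma(c_1))<\Gamma$. This is exactly where the hypothesis that work is given to $r$ machines has to be used and cannot be waved away: if some $\mu(c_i)-\gamma(c_i)$ is very large compared with $\Gamma$, balancing is genuinely suboptimal (it is better to starve that machine and effectively run fewer than $r$), so the conclusion can hold only in the regime where employing all $r$ machines really does lower energy. I would obtain the needed inequality from Theorem~\ref{th4}: specialising condition~\eqref{label:th} to the case where the work handed to $c_r$ is drawn entirely from $c_1$ (so $s(c_1)=\delta$ and $s(c_i)=0$ for $i\ge 2$) collapses it to precisely $\Gamma>(\mu(c_r)-\gamma(c_r))-(\mu(c_1)-\gamma(c_1))$, closing the argument. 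Finally I would dispose of one technicality: if several machines are tied at the maximum load, a single transfer need not shrink $T$, so to be safe I would finish via the linear-programming view---$E_{m,r}$ is linear over the ordered simplex $\{\tau(c_1)\ge\cdots\ge\tau(c_r)\ge 0,\ \sum\tau(c_i)=W\}$, whose only vertices are the balanced schedules $(W/k,\dots,W/k,0,\dots,0)$; the minimum is attained at a vertex, and the hypothesis that all $r$ machines are worth using (Theorem~\ref{th4}) makes the fully balanced vertex $\tau(c_i)=W/r$ the minimiser.
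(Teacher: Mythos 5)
Your proposal is correct and takes essentially the same route as the paper's own proof: both reduce the energy to the linear form $E_{m,r}=\sum_{i=1}^{r}(\mu(c_i)-\gamma(c_i))\tau(c_i)+\tau(c_1)\Gamma$, compute the energy change of a small transfer of work between $c_1$ and $c_r$, and settle the sign of $(\mu(c_1)-\gamma(c_1))-(\mu(c_r)-\gamma(c_r))+\Gamma$ by extracting exactly that inequality from the condition of Theorem~\ref{th4}. Your concluding vertex argument over the ordered simplex is a refinement rather than a different approach: it rigorously covers ties at the maximum load and imbalances across many machines, which the paper handles only with the remark that its two-machine perturbation ``is generalizable.''
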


\begin{proof}
We prove this by contradiction.  Consider two cases, one in which the
distribution of work among the machines, which qualify to work
according to Theorem~\ref{th4} is equal, and the other in which the
work distribution is unequal (only two machines have given different/unequal amount of work and that amount of work differs by $\epsilon$. As we show, even if two machines differ in amount of work given to them, in that case also the energy is increased).  We claim, to show the contradiction,
that in the case in which the distribution is unequal, the energy
consumption is greater as compared to the other, and that such is
hence a non-optimal distribution.

Case 1: Equal distribution, $\tau(c_i)= \frac{W}{r}, \forall i, 1 \leq i \leq 
r$,
i.e., $\tau(c_1) = \tau(c_2) = \tau(c_3) = \ldots = \tau(c_r) = \frac{W}{r}$.

Case 2: Unequal distribution, $\tau(c_1) = \frac{W}{r} + \epsilon$, $\tau(c_2) =
\tau(c_3) = \ldots = \tau(c_{r-1}) = \frac{W}{r}$ and $\tau(c_r)= \frac{W}{r} -
\epsilon$.

Here we assign more work to machine 1 compared to that required by
equal distribution of work among $r$ machines (by Lemma~\ref{lem1}, a
bias has to favor smaller-numbered machines, and therefore machine 1
most of all).  The amount of extra work given to machine 1 is
$\epsilon$ and this amount of work is withdrawn from machine $c_r$.
We chose only to alter the work distributions of machines 1 and $r$ to
keep the proof simple.  We could have chosen any machine $c_k$ for
doing extra work and any machine $c_l$ for doing less work, where $k
> l$.  But for allotting extra work $\epsilon$ to machine $c_k$, we
have to increase the amount of work assigned to all machines in the
set $\{c_1, c_2, \ldots , c_k\}$ by at least $\epsilon$, so that
Lemma~\ref{lem1} is satisfied.

Similarly for reducing the work of machine $c_l$ by $\epsilon$, we have
to reduce the amount of work assigned to all machines in the set
$\{c_{l+1}, c_{l+2}, \ldots, c_r\}$ by at least $\epsilon$ so that 
Lemma~\ref{lem1}
is satisfied.  Hence if we want to alter the amount of work of just
two machines (to keep the proof simple) from an equal distribution,
then we have to alter it for machines $c_1$ and $c_r$.  It may be noted that
our proof is generalizable for imbalances involving any number of
machines.

Let the energy in Case 1 be denoted by $E_{m,r}'$ and the energy in
Case 2 be $E_{m,r}''$.  Then according to our proposition:
\begin{equation*}
 E_{m,r}''<E_{m,r}'
\end{equation*}
\begin{equation}
 E_{m,r}''-E_{m,r}'<0
 \label{label:con}
\end{equation}
Rewriting~\eqref{label:enf} after expanding, we get
\begin{equation}
\begin{split}
E_{m,r} = & (\mu(c_1)-\gamma(c_1) - \mu(c_r) + \gamma(c_r))\tau(c_1) \\
&+ \sum_{i = 2}^{r-1} [(\mu(c_i)-\gamma(c_i) - \mu(c_r) + \gamma(c_r))\tau(c_i) 
]\\
&+ (\mu(c_r) - \gamma(c_r))W + \tau(c_1)\sum_{i = 1}^{m}\gamma(c_i) 
\end{split}
\label{label:enfull}
\end{equation}
Putting the respective values of $\tau(c_i)$'s in~\eqref{label:enfull} to
derive $E_{m,r}'$,
\begin{equation}
\begin{split}
E_{m,r}' = & (\mu(c_1)-\gamma(c_1) - \mu(c_r) + 
\gamma(c_r))\left(\frac{W}{r}\right) \\
&+ \sum_{i = 2}^{r-1} \left[(\mu(c_i)-\gamma(c_i) - \mu(c_r) + 
\gamma(c_r))\left(\frac{W}{r}\right) \right]\\ 
&+ (\mu(c_r) - \gamma(c_r))W + \left(\frac{W}{r}\right)\sum_{i = 
1}^{m}\gamma(c_i) 
\end{split}
 \label{label:enfull1}
\end{equation}
Similarly, we may derive $E_{m,r}''$ as:
\begin{equation}
 \begin{split}
E_{m,r}'' = & (\mu(c_1)-\gamma(c_1) - \mu(c_r) + \gamma(c_r))\left(\frac{W}{r}+ 
\epsilon\right) \\
&+ \sum_{i = 2}^{r-1} \left[(\mu(c_i)-\gamma(c_i) - \mu(c_r) + 
\gamma(c_r))\left(\frac{W}{r}\right) \right]\\ 
&+ (\mu(c_r) - \gamma(c_r))W + \left(\frac{W}{r}+ \epsilon\right)\sum_{i = 
1}^{m}\gamma(c_i) 
 \end{split}
  \label{label:enfull2}
\end{equation}
Subtracting~\eqref{label:enfull1} from~\eqref{label:enfull2}, we get
\begin{equation}
E_{m,r}''- E_{m,r}' =\epsilon[(\mu(c_1)-\gamma(c_1) - \mu(c_r) + \gamma(c_r))+ 
\sum_{i =
1}^{m}\gamma(c_i) ]
\label{label:enfulld}
\end{equation}
Using~\eqref{label:con} and~\eqref{label:enfulld}, we get
\begin{equation}
 (\mu(c_1)-\gamma(c_1) - \mu(c_r) + \gamma(c_r))+ \sum_{i = 1}^{m}\gamma(c_i)<0
 \label{label:cone}
\end{equation}
Now from Theorem~\ref{th4},
\begin{equation*}
\sum_{i = 1}^{r-1}[(\mu(c_i) -\gamma(c_i)-\mu(c_r)+\gamma(c_r))s(c_i)] + 
s(c_1)\sum_{i = 1}^{m}
\gamma(c_i) >
0
\end{equation*}
Re-writing the above equation, we get,
\begin{equation}
\begin{split}
&(\mu(c_1) -\gamma(c_1)-\mu(c_r)+\gamma(c_r))s(c_1)+\sum_{i = 2}^{r-1} 
[(\mu(c_i)\\
&-\gamma(c_i)-\mu(c_r)+\gamma(c_r))s(c_i)] + s(c_1)\sum_{i = 1}^{m} \gamma(c_i) 
> 0
\end{split}
 \label{label:th4ea}
\end{equation}

Now $s(c_i) \geq 0, \forall i \in \{1,2,\ldots,r\}$ and according to our
Assumption~\ref{assum:order} in Section~\ref{sec:pf},
$(\mu(c_i)-\gamma(c_i)-\mu(c_r)+\gamma(c_r)) \leq 0, \forall i \in
\{1,2,\ldots,r\}$. Hence the first and the second term
of~\eqref{label:th4ea} are negative, and the third term is positive.
So for the condition in~\eqref{label:th4ea} to hold, the following is
necessary:
\begin{equation*}
(\mu(c_1) -\gamma(c_1)-\mu(c_r)+\gamma(c_r))s(c_1)+ s(c_1)\sum_{i = 1}^{m} 
\gamma(c_i) > 0
 \label{label:th4eb1}
\end{equation*}

This means,
\begin{equation}
(\mu(c_1) -\gamma(c_1)-\mu(c_r)+\gamma(c_r))+\sum_{i = 1}^{m} \gamma(c_i) > 0.
 \label{label:th4eb}
\end{equation}

But~\eqref{label:cone} is in contradiction with~\eqref{label:th4eb}.

Hence, $E_{m,r}''-E_{m,r}'<0$ is false, which means, $E_{m,r}' <
E_{m,r}''$.  Hence it stands proved that we should give equal
fractions of jobs to all the machines to get minimum energy. QED.
\end{proof}

Thus, for energy-minimal scheduling in this setting, work has to be
divided equally among $r$ machines, where $r$ is chosen to
satisfy~\eqref{label:th}. Using these results we may give a more
general statement.

\begin{corollary}
\label{cor:equal_time}
When a given set of machines are chosen for work, then all those
machines should be in working state for an equal length of time for
energy-minimal scheduling.
\end{corollary}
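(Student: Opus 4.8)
The plan is to read this corollary directly off Theorem~\ref{th5} together with the identical-speed identification of work with time. Theorem~\ref{th5} already establishes that when $r>1$ machines are selected for work, the energy-minimal allocation assigns each of them the same amount $\frac{W}{r}$; and, inspecting its proof, that conclusion is phrased precisely in terms of the working times, namely $\tau(c_i)=\frac{W}{r}$ for every working machine. So the substantive content has already been proven, and the corollary merely recasts it in words.

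Concretely, I would first invoke Lemma~\ref{lem2} so that ``a given set of machines chosen for work'' is exactly the prefix $\{c_1,\dots,c_r\}$ in our index order, since no other choice can be optimal. Next I would apply Theorem~\ref{th5} to obtain $\tau(c_1)=\tau(c_2)=\cdots=\tau(c_r)=\frac{W}{r}$, which is an equality of working times. In the identical-speed setting this is literally the claim, because $\upsilon(c_i)=1$ forces $\tau(c_i)=w(c_i)$ by~\eqref{label:w}, so that equal distribution of work and equal working time coincide. Stating the result in terms of time (rather than work) is exactly what makes it read off uniformly and what motivates the different-speed analysis that follows, where the two notions diverge.

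The only point needing any attention—and it is slight—is the translation between ``equal distribution of jobs'' and ``equal length of working time,'' and for identical-speed machines this is immediate from $\tau(c_i)=w(c_i)$, so there is no genuine obstacle: the corollary is a direct consequence of Theorem~\ref{th5}. The single edge case to note is $r=1$, which Theorem~\ref{th5} excludes; there the statement holds trivially, as a lone working machine is vacuously in the working state for an ``equal'' length of time, so the corollary remains valid for all $1\le r\le m$.
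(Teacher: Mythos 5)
Your proposal is correct and matches the paper's own treatment: the paper states this corollary without a separate proof, presenting it (after Theorem~\ref{th5} and Lemma~\ref{lem2}) as a direct restatement of the equal-distribution result in terms of working time, which in the identical-speed setting is immediate since $\upsilon(c_i)=1$ gives $\tau(c_i)=w(c_i)$ by~\eqref{label:w}. Your added remarks on the $r=1$ case and the explicit appeal to Lemma~\ref{lem2} are harmless refinements of the same argument, not a different route.
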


In the next subsection, we analyse the case when the speeds of
machines are also different.  That case is more general and includes
the current case of identical speeds, but the results for the
identical speed case help develop theorems for variable speed system.

\subsection{Systems with Different Speed Machines}
\label{subsec:var_sp}

In the previous subsection we gained insight on how a subset of
machines can prove to be more advantageous for working for the
complete makespan, while the rest of the machines better be idle for
all times.  In this subsection we use the results of previous
subsection to develop a theory for systems in which machines can have
different speeds.

Like previous subsection, let us assume that work is assigned to only a subset 
$\mathcal{R}$ out of $m$ machines, where 
$|\mathcal{R}| = r$. Hence,
\begin{equation}
\tau(c_i) > 0, \forall i \in \mathcal{R}
\label{eq:nonzero_t}
\end{equation}
\begin{equation}
\tau(c_k) = 0, \forall k \notin \mathcal{R}
\label{eq:equal_t2}
\end{equation}

With a little rearrangement of~\eqref{label:energy} we get,
\begin{equation}
E = \sum_{i = 1}^{m} [(\mu(c_i)-\gamma(c_i))\tau(c_i) + \gamma(c_i)T]
\label{label:energy5_1}
\end{equation}

Using~\eqref{eq:nonzero_t} and~\eqref{eq:equal_t2} in~\eqref{label:energy5_1}, 
we have, 
\begin{equation}
 E_{m,r} = \sum_{i \in 	\mathcal{R}} [(\mu(c_i)-\gamma(c_i))]\tau(c_i) + 
\sum_{i = 1}^{m}\gamma(c_i)T
\label{eq:eqg5}
\end{equation}

Here the makespan $T$ is defined as:
\begin{equation}
T = \max(\tau(c_i)), \forall i \in \mathcal{R}
\label{eq:nonzero_ms}
\end{equation}

Putting the value of  $\sum_{i = 1}^{m}\gamma(c_i) = \Gamma$ in~\eqref{eq:eqg5},
\begin{equation}
E_{m,r} = \sum_{i \in 	\mathcal{R}} [(\mu(c_i)-\gamma(c_i))]\tau(c_i) + \Gamma 
T
\label{eq:EVarTempp}
\end{equation}

Our aim is to find out the set $\mathcal{R}$ for which $E_{m,r}$ is minimum. We 
first find out that if all the work has to be assigned to only one
machine, then which will that machine be.  We present the answer in our
first lemma of this section.

\begin{lemma}
If all the work is assigned to one machine $c_i$, i.e.,
$|\mathcal{R}| = 1$ and $\mathcal{R} = \{c_i\}$, then for energy
minimality this should be the machine with minimum
$\frac{\mu(c_i)-\gamma(c_i)+\Gamma}{\upsilon(c_i)}$.
\label{lemm1}
\end{lemma}

\begin{proof}
Since all the work is assigned to only one machine $c_i$, the makespan will be, 
\begin{equation}
T = \tau(c_i) = \frac{W}{\upsilon(c_i)}
\label{eq:ms_single}
\end{equation}
Substituting these values of $\mathcal{R}$, $T$ and $\tau(c_i)$ in 
~\eqref{eq:EVarTempp}, we get,
\begin{eqnarray}
E_{m,1} &=& \frac{W(\mu(c_i)-\gamma(c_i))}{\upsilon(c_i)} +  
\frac{W\Gamma}{\upsilon(c_i)}\\
&=& \frac{W(\mu(c_i)-\gamma(c_i)+\Gamma)}{\upsilon(c_i)}
\label{eq:lm1p}
\end{eqnarray}

Since $W$ does not depend on machines, to minimize energy consumption 
$E_{m,1}$, we need to choose
$c_i$ for which $\frac{\mu(c_i)-\gamma(c_i)+\Gamma}{\upsilon(c_i)}$ is
minimum. QED.
\end{proof}

As mentioned in Assumption~\ref{assum:order} in Section~\ref{sec:pf},
we order the machines such that the first machine is with minimum
$\frac{\mu(c_i)-\gamma(c_i)+\Gamma}{\upsilon(c_i)}$. Hence here $i =
1$, and the first machine is given work in this case.

Now before moving on to find out that which other machines should be included 
in working set, we will formulate the amount of work that should be given to 
each machine in the working set. 

\begin{lemma}
If we give jobs to a set $\mathcal{R}$ of machines, then for minimum energy
consumption, the working time of all $r$ machines should be
equal and given by $\frac{W}{\sum_{i \in \mathcal{R}} \upsilon(c_i)}$.\\
\label{lemma:same_time}
\end{lemma}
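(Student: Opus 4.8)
The plan is to prove this by contradiction, closely paralleling the argument of Theorem~\ref{th5} for the identical-speed case, but now carrying the machine speeds $\upsilon(c_i)$ through the work-conservation constraint. Starting from the energy expression~\eqref{eq:EVarTempp}, namely $E_{m,r} = \sum_{i \in \mathcal{R}}(\mu(c_i)-\gamma(c_i))\tau(c_i) + \Gamma T$, I would first record the two facts that any feasible schedule on $\mathcal{R}$ must satisfy: the total work is fixed, $\sum_{i \in \mathcal{R}} \upsilon(c_i)\tau(c_i) = W$, and the makespan is $T = \max_{i \in \mathcal{R}} \tau(c_i)$ by~\eqref{eq:nonzero_ms}.

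A useful preliminary observation is that, for any feasible assignment, $W = \sum_{i\in\mathcal{R}} \upsilon(c_i)\tau(c_i) \le T \sum_{i\in\mathcal{R}} \upsilon(c_i)$, so that $T \ge W / \sum_{i\in\mathcal{R}}\upsilon(c_i)$, with equality exactly when every $\tau(c_i)$ equals the common value $t := W / \sum_{i\in\mathcal{R}}\upsilon(c_i)$. Thus equal working times are precisely the feasible assignments that minimize the makespan term $\Gamma T$; this both identifies the candidate optimum and pins down the claimed value of $t$. To show that equal times minimize the \emph{full} energy, I would assume an optimal schedule whose times are not all equal and derive a contradiction. As in Theorem~\ref{th5}, it suffices to compare the equal assignment (Case~1, $\tau(c_i)=t$ for all $i\in\mathcal{R}$, energy $E'$) against a two-machine perturbation (Case~2, energy $E''$): raise one machine $c_k$ to $t+\delta$ and lower another machine $c_l$ so that work is conserved. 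Because speeds differ, conservation forces the time on $c_l$ to drop by $(\upsilon(c_k)/\upsilon(c_l))\delta$, and since $c_k$ now carries the longest time the makespan rises to $t+\delta$. Substituting into~\eqref{eq:EVarTempp} and subtracting, the bulk terms cancel and I expect to obtain
\[
E'' - E' = \delta\,\upsilon(c_k)\left[\frac{\mu(c_k)-\gamma(c_k)+\Gamma}{\upsilon(c_k)} - \frac{\mu(c_l)-\gamma(c_l)}{\upsilon(c_l)}\right].
\]

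The remaining step is to argue this difference is nonnegative for every admissible $(k,l)$, so that no perturbation beats the equal assignment. This is where the ordering of Assumption~\ref{assum:order} enters: indices are chosen so that the first machine minimizes $\frac{\mu(c_i)-\gamma(c_i)+\Gamma}{\upsilon(c_i)}$ (this is also why, by Lemma~\ref{lemm1}, machine~$c_1$ is always in $\mathcal{R}$) while the remaining machines are non-decreasing in $\frac{\mu(c_i)-\gamma(c_i)}{\upsilon(c_i)}$, which is exactly the quantity controlling the sign of the bracket. The worst case is $k$ the index minimizing the first ratio and $l$ the largest-indexed machine in $\mathcal{R}$, so that controlling the bracket reduces to the single inequality $\frac{\mu(c_1)-\gamma(c_1)+\Gamma}{\upsilon(c_1)} \ge \frac{\mu(c_r)-\gamma(c_r)}{\upsilon(c_r)}$.

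I expect this last inequality to be the main obstacle, and it deserves to be stated carefully, because it does not follow from the ordering alone: it encodes precisely the condition under which machine $c_r$ is worth including in $\mathcal{R}$ (the variable-speed analogue of the inclusion test of Theorem~\ref{th4}). The clean statement to target is therefore that, for any working set $\mathcal{R}$ meeting this inclusion condition, the bracket above is nonnegative for every admissible pair, so $E'' \ge E'$ and the equal assignment is optimal. Finally, exactly as argued at the close of Theorem~\ref{th5}, restricting to a two-machine perturbation is without loss of generality, since any unequal assignment respecting the index order can be reached from the equal one by a sequence of such single moves; hence $t = W/\sum_{i\in\mathcal{R}}\upsilon(c_i)$ is the energy-minimal common working time, as claimed.
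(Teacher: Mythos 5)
Your preliminary steps are sound: the feasibility bound $T \geq W/\sum_{i\in\mathcal{R}}\upsilon(c_i)$ with equality exactly at equal times is correct, your exchange computation $E''-E' = \delta\,\upsilon(c_k)\bigl[\frac{\mu(c_k)-\gamma(c_k)+\Gamma}{\upsilon(c_k)} - \frac{\mu(c_l)-\gamma(c_l)}{\upsilon(c_l)}\bigr]$ checks out against \eqref{eq:EVarTempp}, and you are right---indeed more explicit than the paper---that the lemma silently requires $\mathcal{R}$ to satisfy the inclusion condition of Theorem~\ref{thm2} (for an arbitrary $\mathcal{R}$ the statement is false: with two unit-speed machines, $\mu(c_1)-\gamma(c_1)=0$, $\Gamma=1$, and $\mu(c_2)-\gamma(c_2)$ large, draining $c_2$ beats equal sharing). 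However, your final step contains a genuine gap: the claim that a two-machine perturbation is without loss of generality fails here, because the makespan term $\Gamma T$ is not additive across a sequence of pairwise moves. Each of your pairwise bounds charges a full increment $\Gamma\delta$ to the move, but in a deviation where \emph{several} machines rise above $t$ simultaneously, funded by one expensive machine, the makespan rises only once. Concretely, raising every machine in $\mathcal{R}\setminus\{c_r\}$ to $t+\delta$ while lowering $c_r$ to conserve work changes the energy by
\begin{equation*}
\delta\left[\sum_{i=1}^{r-1}(\mu(c_i)-\gamma(c_i)) + \Gamma - \frac{\mu(c_r)-\gamma(c_r)}{\upsilon(c_r)}\sum_{i=1}^{r-1}\upsilon(c_i)\right],
\end{equation*}
which is nonnegative iff $\frac{p_{r-1}}{q_{r-1}} \geq \frac{\mu(c_r)-\gamma(c_r)}{\upsilon(c_r)}$ in the notation of \eqref{eq:clr1} and \eqref{eq:clr2}. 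Since each admission strictly decreases the ratio $p/q$ (as shown inside Theorem~\ref{thm2}), we have $p_{r-1}/q_{r-1} \leq p_1/q_1$, so this aggregate condition is \emph{strictly stronger} than your pairwise worst case $\frac{\mu(c_1)-\gamma(c_1)+\Gamma}{\upsilon(c_1)} \geq \frac{\mu(c_r)-\gamma(c_r)}{\upsilon(c_r)}$. Pairwise nonnegativity therefore does not rule out a grouped deviation beating the equal assignment; your identification of the binding inequality is the wrong one.

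The repair is to perturb by subsets rather than pairs: raise a set $S\subseteq\mathcal{R}$ to $t+\delta$ and lower the complement, with the worst case $S=\mathcal{R}\setminus\{c_r\}$, and then pass from local to global optimality by noting that for fixed makespan $T$ the inner minimization of $\sum_{i\in\mathcal{R}}(\mu(c_i)-\gamma(c_i))\tau(c_i)$ subject to $\sum_{i\in\mathcal{R}}\upsilon(c_i)\tau(c_i)=W$, $0\leq\tau(c_i)\leq T$, is a linear program whose value is convex in $T$, so a nonnegative right slope at $T=t$ suffices. It is worth seeing how the paper sidesteps this entirely: its proof is an induction on $|\mathcal{R}|$ (basis from Lemma~\ref{lemm1}), writing $E_{m,r+1} = E_{m,r} - \tau(c_{r+1})\bigl[\frac{p_r\,\upsilon(c_{r+1})}{q_r} - (\mu(c_{r+1})-\gamma(c_{r+1}))\bigr]$ and maximizing $\tau(c_{r+1})$ up to the common time, where the bracket's positivity is exactly the aggregate inclusion condition of Theorem~\ref{thm2}---precisely the hypothesis your pairwise reduction loses. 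Your appeal to the closing argument of Theorem~\ref{th5} inherits a hand-wave that is already shaky there, but in the different-speed setting it is demonstrably insufficient rather than merely informal.
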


\begin{proof}
We prove this lemma by induction. If working time for all the machines in set 
$\mathcal{R}$ is equal then,
\begin{equation}
\tau(c_i) = \tau(c_j) = \frac{W}{\sum_{i \in \mathcal{R}} \upsilon(c_i)},\quad 
\forall i,j \in \mathcal{R}
\label{eq:eq_ms}
\end{equation}

From~\eqref{eq:nonzero_ms},
\begin{equation}
T =  \frac{W}{\sum_{i \in \mathcal{R}} \upsilon(c_i)}
\label{eq:var_makespan}
\end{equation} 

Using~\eqref{eq:eq_ms} and~\eqref{eq:var_makespan} in~\eqref{eq:EVarTempp},
\begin{equation}
 E_{m,r} = W \left [ \frac{\sum_{i \in \mathcal{R}} (\mu(c_i)-\gamma(c_i)) + 
\Gamma}{\sum_{i \in 
\mathcal{R}} \upsilon(c_i)}\right ]
\label{eq:EVarTemp2p}
\end{equation}

\textbf{Basis:} As the basis step of induction, we show~\eqref{eq:EVarTemp2p} 
and hence our lemma holds for $|\mathcal{R}| = 1$. Substituting $\mathcal{R} 
= \{c_i\}$ in~\eqref{eq:EVarTemp2p},
\begin{equation*}
 E_{m,1} = \frac{W(\mu(c_i)-\gamma(c_i)+\Gamma)}{\upsilon(c_i)}
\end{equation*}
This is the same as~\eqref{eq:lm1p}. Thus it has been shown that basis step of 
induction holds.

\textbf{Induction Step:} In this step we show that if ~\eqref{eq:EVarTemp2p} 
holds for $|\mathcal{R}| = r$, then it will also hold for $|\mathcal{R'}| = 
r+1$. This means if we include another machine in the working set $\mathcal{R}$ 
to minimize the energy consumption, then all the machines in the new working 
set $\mathcal{R'}$ should be working for equal amount of time. Mathematically, 
the new makespan should be given by:
\begin{equation}
T_{r+1} =  \frac{W}{\sum_{i \in \mathcal{R'}} \upsilon(c_i)}
\label{eq:ms_r1}
\end{equation}

For ease of representation, let 
\begin{equation}
p_r = \sum_{i \in \mathcal{R}}(\mu(c_i)-\gamma(c_i))+\Gamma
\label{eq:clr1}
\end{equation}
and 
\begin{equation}
q_r = \sum_{i \in \mathcal{R}} \upsilon(c_i)
\label{eq:clr2}
\end{equation}

\eqref{eq:EVarTemp2p} can be written as:
\begin{equation}
 E_{m,r} = W \left [ \frac{p_r}{q_r}\right ]
\label{eq:EVarTemp4p}
\end{equation}

Let the working time of machine $c_{r+1}$ be $\tau(c_{r+1})$. The updated 
working time of all the machines in set $\mathcal{R}$ will be:
\begin{equation}
T_{r+1} = \frac{W - \tau(c_{r+1})\upsilon(c_{r+1})}{q_r}
\label{eq:r1_ms_temp}
\end{equation}
We can assume $\tau(c_{r+1}) \leq T_{r+1}$, because if it is not the case then 
we swap machine $c_{r+1}$ with any other machine in $\mathcal{R}$ which holds 
the assumption true. And since $\tau(c_{r+1}) \leq T_{r+1}$, the new makespan 
of the system will be $T_{r+1}$.

Using these values in equation~\eqref{eq:EVarTempp},

\begin{eqnarray*}
E_{m,r+1} &=& [ \sum_{i \in \mathcal{R}} (\mu(c_i)-\gamma(c_i)) + 
\Gamma]T_{r+1} + (\mu(c_{r+1})-\gamma(c_{r+1}))\tau(c_{r+1})\\
&=& p_r T_{r+1} + (\mu(c_{r+1})-\gamma(c_{r+1}))\tau(c_{r+1})
\label{eq:eq_ms_st1}
\end{eqnarray*}

Substituting $T_{r+1}$ from~\eqref{eq:r1_ms_temp} in above equation,
\begin{eqnarray*}
E_{m,r+1} &=& p_r \left [ \frac{W - \tau(c_{r+1})\upsilon(c_{r+1})}{q_r} 
\right] + (\mu(c_{r+1})-\gamma(c_{r+1}))\tau(c_{r+1})\\
&=& \frac{W p_r}{q_r} - \frac{p_r \tau(c_{r+1})\upsilon(c_{r+1})}{q_r} + 
(\mu(c_{r+1})-\gamma(c_{r+1}))\tau(c_{r+1})\\
&=& E_{m,r} - \tau(c_{r+1}) \left [ \frac{p_r \upsilon(c_{r+1})}{q_r} - 
(\mu(c_{r+1})-\gamma(c_{r+1})) \right ]
\label{eq:simp_eq_ms_ds}
\end{eqnarray*}
Hence,
\begin{equation}
E_{m,r} - E_{m,r+1} = \tau(c_{r+1}) \left [ \frac{p_r \upsilon(c_{r+1})}{q_r} - 
(\mu(c_{r+1})-\gamma(c_{r+1})) \right ]
\label{eq:eq_ms_ds}
\end{equation}
Since the energy consumption of the system has to decrease by addition of 
machine $c_{r+1}$ in working set, i.e., $E_{m,r} - E_{m,r+1} > 0$, the R.H.S. 
of~\eqref{eq:eq_ms_ds} has to be positive. Since time cannot be negative, both 
the sub-terms $\tau(c_{r+1})$ and $\left [ \frac{p_r \upsilon(c_{r+1})}{q_r} - 
(\mu(c_{r+1})-\gamma(c_{r+1})) \right ]$ will be positive. The term $\left [ 
\frac{p_r \upsilon(c_{r+1})}{q_r} - (\mu(c_{r+1})-\gamma(c_{r+1})) \right ]$ is 
independent of working time allocation of machines. Hence to maximize $E_{m,r} 
- E_{m,r+1}$, we need to maximize $\tau(c_{r+1})$. As we know the maximum value 
of $\tau(c_{r+1}) = T_{r+1}$, using this in~\eqref{eq:r1_ms_temp},
\begin{eqnarray*}
T_{r+1} &=& \frac{W - T_{r+1}\upsilon(c_{r+1})}{q_r}\\
        &=& \frac{W - T_{r+1}\upsilon(c_{r+1})}{\sum_{i \in \mathcal{R}} 
\upsilon(c_i)}
\label{eq:wxyz1}
\end{eqnarray*}
Simplifying above equation,
\begin{eqnarray*}       
T_{r+1}\left (\sum_{i \in \mathcal{R}}\upsilon(c_i) \right ) &=& W - 
T_{r+1}\upsilon(c_{r+1})\\
T_{r+1} \left (\sum_{i \in \mathcal{R}}\upsilon(c_i) + \upsilon(c_{r+1}) \right 
) &=& W\\
T_{r+1}\left (\sum_{i \in \mathcal{R'}}\upsilon(c_i)\right ) &=& W\\
T_{r+1} &=& \frac{W}{\sum_{i \in \mathcal{R'}}\upsilon(c_i)}       
\label{eq:wxyz2}
\end{eqnarray*}
This shows~\eqref{eq:ms_r1} is true. Hence our induction step is also proved. 
This completes the proof that all the machines in the working set should be 
working for equal amount of time for minimal energy consumption.

\end{proof}

Using above lemma, the energy consumed by system when machines in set 
$\mathcal{R}$ are working is given by:
\begin{equation*}
 E_{m,r} = W \left [ \frac{\sum_{i \in \mathcal{R}} (\mu(c_i)-\gamma(c_i)) + 
\Gamma}{\sum_{i \in 
\mathcal{R}} \upsilon(c_i)}\right ]
\label{eq:EVarTemp2pp}
\end{equation*}
Since our goal is to find out relative distribution of work, without
any loss of generality, we can assume that the total work is one unit
to make the representation simpler.
\begin{equation}
 E_{m,r} = \frac{\sum_{i \in \mathcal{R}} (\mu(c_i)-\gamma(c_i)) + 
\Gamma}{\sum_{i \in 
\mathcal{R}} \upsilon(c_i)}
\label{eq:EVarTemp2}
\end{equation}

It can be noted here that irrespective of machines being of identical or 
different speeds, the machines in the working set should be working for equal 
amount of time  for minimal energy consumption. 

We already figured out which machine to give work to if only one machine can be 
assigned work. Now, if we want to give jobs to two machines then we will only 
do this if the energy consumed by two machines is less than the energy 
consumption of one machine. The same principle applies whenever we want to 
expand the working set of machines. The new expanded set must have lower energy 
consumption than the previous set. In the next theorem, we specify the 
conditions under which a set can be expanded.

\begin{theorem}
If a machine $c_j$ has to be included in the working set of machines 
$\mathcal{R}$, then the following condition must be satisfied
\begin{equation}
 \frac{\sum_{i \in \mathcal{R}} (\mu(c_i)-\gamma(c_i)) + \Gamma}{\sum_{i \in 
\mathcal{R}} \upsilon(c_i)} > \frac{\mu(c_j)-\gamma(c_j)}{\upsilon(c_j)}
\label{eq:con1}
\end{equation}
 \label{thm2}
\end{theorem}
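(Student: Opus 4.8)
The plan is to compare the minimum energy achievable with working set $\mathcal{R}$ against that achievable with the enlarged set $\mathcal{R}' = \mathcal{R} \cup \{c_j\}$, and to read off the inclusion condition from the requirement that adding $c_j$ strictly lowers the energy. The crucial input is Lemma~\ref{lemma:same_time}: for any working set, the minimum-energy allocation runs all chosen machines for an equal time, so the optimal energy of a set is given purely by the aggregate formula~\eqref{eq:EVarTemp2}. This reduces the whole question to comparing two fractions.

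First I would introduce the shorthand $p = \sum_{i \in \mathcal{R}}(\mu(c_i)-\gamma(c_i))+\Gamma$ and $q = \sum_{i \in \mathcal{R}}\upsilon(c_i)$, so that by~\eqref{eq:EVarTemp2} (taking unit total work) the optimal energy for $\mathcal{R}$ is $E_{m,r} = p/q$. Including $c_j$ adds $\mu(c_j)-\gamma(c_j)$ to the numerator sum and $\upsilon(c_j)$ to the denominator sum, so the optimal energy of $\mathcal{R}'$ is
\begin{equation*}
E_{m,r+1} = \frac{p + (\mu(c_j)-\gamma(c_j))}{q + \upsilon(c_j)}.
\end{equation*}
I would then impose the inclusion requirement $E_{m,r+1} < E_{m,r}$, i.e.\ $\frac{p+(\mu(c_j)-\gamma(c_j))}{q+\upsilon(c_j)} < \frac{p}{q}$. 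Because every speed is positive, both $q$ and $q+\upsilon(c_j)$ are positive, so cross-multiplying preserves the inequality; after expanding, the common term $pq$ cancels, leaving $(\mu(c_j)-\gamma(c_j))\,q < p\,\upsilon(c_j)$. Dividing through by the positive quantity $q\,\upsilon(c_j)$ gives $\frac{\mu(c_j)-\gamma(c_j)}{\upsilon(c_j)} < \frac{p}{q}$, which upon resubstituting the definitions of $p$ and $q$ is precisely condition~\eqref{eq:con1}.

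There is essentially no obstacle here beyond careful bookkeeping: the argument is a mediant-type inequality stating that appending a new (numerator, denominator) pair to a running ratio pulls that ratio toward the newcomer's own ratio, so the aggregate strictly decreases exactly when the newcomer's ratio $\frac{\mu(c_j)-\gamma(c_j)}{\upsilon(c_j)}$ lies below the current aggregate $\frac{p}{q}$. The single point deserving explicit mention is the positivity of the denominators, which justifies the cross-multiplication without reversing the inequality; I would state $\upsilon(c_i)>0$ before that step.
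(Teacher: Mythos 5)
Your proposal is correct and is essentially the paper's own argument: both reduce the comparison of working sets to the equal-time aggregate formula of Lemma~\ref{lemma:same_time} (i.e.,~\eqref{eq:EVarTemp2}) with the shorthand $p_r$, $q_r$, and both rest on the same cross-multiplication (mediant) manipulation between $\frac{p_r}{q_r}$ and $\frac{p_r+\mu(c_j)-\gamma(c_j)}{q_r+\upsilon(c_j)}$. The only difference is direction---you derive condition~\eqref{eq:con1} from the requirement $E_{m,r+1} < E_{m,r}$ (matching the theorem's stated necessity), whereas the paper starts from the condition and deduces the energy decrease; since every step is a reversible inequality over positive denominators (a point you rightly make explicit), the two derivations are equivalent.
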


\begin{proof}
Re-writing our condition~\eqref{eq:con1}, using~\eqref{eq:clr1} 
and~\eqref{eq:clr2}
\begin{equation*}
\frac{p_r}{q_r} > \frac{\mu(c_j)-\gamma(c_j)}{\upsilon(c_j)}
\label{eq:c1}
\end{equation*}
Multiplying the denominators both sides,
\begin{equation*}
p_r \upsilon(c_j) > \mu(c_j) q_r - \gamma(c_j) q_r
\label{eq:c2}
\end{equation*}
Adding $p_r q_r$ both sides,
\begin{eqnarray*}
p_r \upsilon(c_j)  + p_r q_r > \mu(c_j) q_r - \gamma(c_j) q_r + p_r q_r \\
p_r(\upsilon(c_j)  + q_r) > q_r(\mu(c_j) - \gamma(c_j) + p_r) \\
\frac{p_r}{q_r} > \frac{p_r + \mu(c_j)-\gamma(c_j)}{q_r + \upsilon(c_j)}
\label{eq:c3}
\end{eqnarray*}
Back substituting $p_r$ and $q_r$, we get
\begin{equation*}
 \frac{\sum_{i \in \mathcal{R}} (\mu(c_i)-\gamma(c_i)) + \Gamma}{\sum_{i \in 
\mathcal{R}} \upsilon(c_i)} >  \frac{\sum_{i \in \mathcal{R}} 
(\mu(c_i)-\gamma(c_i)) + 
\Gamma + \mu(c_j)-\gamma(c_j)}{\sum_{i \in \mathcal{R}} \upsilon(c_i) + 
\upsilon(c_j)}
\label{eq:Econd1}
\end{equation*}
Let  $\mathcal{R'}$ be the new set formed for including $c_j$ in  $\mathcal{R}$.
\begin{equation}
 \frac{\sum_{i \in \mathcal{R}} (\mu(c_i)-\gamma(c_i)) + \Gamma}{\sum_{i \in 
\mathcal{R}} \upsilon(c_i)} >  \frac{\sum_{i \in \mathcal{R'}} 
(\mu(c_i)-\gamma(c_i)) + 
\Gamma}{\sum_{i \in \mathcal{R'}} \upsilon(c_i)}
\label{eq:Econd2}
\end{equation}
The left hand side of~\eqref{eq:Econd2} gives the energy consumption
by working set $\mathcal{R}$. The right hand side
of~\eqref{eq:Econd2}, gives the energy consumption when machine $c_j$
is included in the original working set. Hence the energy consumption
of the new set is lower when the condition~\eqref{eq:con1} is
satisfied. QED.
\end{proof}

The previous theorem gives the condition in which a machine could be
included in the working set to reduce the energy consumption of the
system. But there can be many machines which satisfy this
condition. We need to find out the machine which reduces the energy
consumption by largest amount and hence make the system energy
minimal. Our next theorem specifies which machine should be given
preference for inclusion in working set.

\begin{theorem}
\label{thm:varSpeed2}
Given any two machines $c_j$ and $c_k$ which qualify to be included in working 
set 
$\mathcal{R}$ according to Theorem~\ref{thm2}, i.e.,
\begin{equation}
\frac{\mu(c_j)-\gamma(c_j)}{\upsilon(c_j)} < \frac{p_r}{q_r}
\label{con1}
\end{equation}
and 
\begin{equation}
\frac{\mu(c_k)-\gamma(c_k)}{\upsilon(c_k)}  <  \frac{p_r}{q_r} 
\end{equation}
For minimal energy consumption, machine $c_j$ shall be chosen over $c_k$ if $\upsilon(c_j) > \upsilon(c_k)$
and
\begin{equation}
\frac{\mu(c_j)-\gamma(c_j)}{\upsilon(c_j)} < 
\frac{\mu(c_k)-\gamma(c_k)}{\upsilon(c_k)} 
\label{eq:orderCond1}
\end{equation}
\label{thmm3}
\end{theorem}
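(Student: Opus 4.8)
The plan is to reduce the statement to a single inequality between the two candidate post-expansion energies. By the mediant computation in the proof of Theorem~\ref{thm2}, adjoining a qualifying machine $c_j$ to $\mathcal{R}$ produces energy $\frac{p_r+(\mu(c_j)-\gamma(c_j))}{q_r+\upsilon(c_j)}$, and similarly for $c_k$; preferring $c_j$ means precisely that its energy is the smaller. So the theorem is equivalent to
\begin{equation*}
\frac{p_r+(\mu(c_j)-\gamma(c_j))}{q_r+\upsilon(c_j)} < \frac{p_r+(\mu(c_k)-\gamma(c_k))}{q_r+\upsilon(c_k)}.
\end{equation*}
To keep the algebra readable I would abbreviate $a=\mu(c_j)-\gamma(c_j)$, $u=\upsilon(c_j)$, $b=\mu(c_k)-\gamma(c_k)$, $v=\upsilon(c_k)$, $P=p_r$, $Q=q_r$, so that the hypotheses read $a/u<P/Q$, $b/v<P/Q$, $u>v$, and $a/u<b/v$. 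Clearing the (positive) denominators, the whole claim reduces to showing that $N=(P+a)(Q+v)-(P+b)(Q+u)<0$.

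After expanding and cancelling $PQ$, one gets $N = P(v-u)+a(Q+v)-b(Q+u)$. Two of the contributions already point the right way: $P(v-u)<0$ since $u>v$, and $a/u<b/v$ gives $av<bu$, another negative contribution. The main obstacle is the cross term $Q(a-b)$, whose sign is genuinely ambiguous --- the hypotheses do permit $a>b$ (take $a=3,b=1,u=10,v=1$) --- so a naive term-by-term bound fails and the two ``good'' pieces cannot simply be added up.

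The device I would use to get past this is to regard $N$ as a function of $b$ alone, with $a,u,v,P,Q$ frozen. It is affine in $b$ with slope $-(Q+u)<0$, hence strictly decreasing, and the hypothesis $a/u<b/v$ is exactly $b>av/u$. It therefore suffices to check the sign of $N$ at the boundary value $b_0=av/u$ and invoke monotonicity for the true, larger $b$. Substituting $b_0$, the $av$ terms cancel and a short simplification gives $N(b_0)=\frac{(u-v)(aQ-Pu)}{u}$, in which $u-v>0$ while $aQ-Pu<0$ is precisely the qualification hypothesis $a/u<P/Q$ for $c_j$. Hence $N(b_0)<0$, and since $N$ strictly decreases in $b$ while the actual $b$ exceeds $b_0$, we conclude $N<0$, i.e.\ $c_j$ yields strictly lower energy. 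A remark worth making in passing is that this argument uses only the qualification of $c_j$ together with $u>v$ and $a/u<b/v$; the assumption that $c_k$ also qualifies is not strictly needed for the inequality and serves only to mark both machines as genuine candidates.
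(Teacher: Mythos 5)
Your proof is correct, and it reaches the paper's target inequality \eqref{eq:condOut1} by a genuinely different route. Both arguments reduce the theorem to the same cross-multiplied comparison $(p_r+a)(q_r+v) < (p_r+b)(q_r+u)$ (in your notation $a=\mu(c_j)-\gamma(c_j)$, $u=\upsilon(c_j)$, $b=\mu(c_k)-\gamma(c_k)$, $v=\upsilon(c_k)$), but the paper proves it by chaining inequalities: from \eqref{eq:orderCond1} it derives the difference-quotient inequality \eqref{con2}, $\frac{a-b}{u-v} < \frac{a}{u}$ (this is exactly where $u>v$ enters, to keep the division direction), combines that with the qualification \eqref{con1} of $c_j$ to obtain \eqref{cd}, and then adds \eqref{ab} and \eqref{cd} to land on the expanded form \eqref{de} of the desired comparison --- in effect splitting your $N$ as $[av-bu] + [(a-b)q_r - p_r(u-v)]$ and showing each bracket negative. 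You instead treat $N$ as affine and strictly decreasing in $b$ and evaluate at the boundary $b_0 = av/u$ forced by the hypothesis $a/u<b/v$, getting $N(b_0) = (u-v)(aq_r - p_r u)/u < 0$ directly from $u>v$ and $c_j$'s qualification. The two decompositions handle the same obstruction --- the sign-ambiguous cross term $q_r(a-b)$, which you are right to flag as the reason a term-by-term bound fails. Your observation that $c_k$'s qualification is never used is accurate, and it is equally true of the paper's proof, though the paper does not remark on it. A minor bonus of your version: at $u=v$ it still yields strict inequality ($N < N(b_0)=0$ since $b>b_0$), whereas the paper's division by $u-v$ in \eqref{con2} degenerates there; and for $u<v$ your boundary term flips sign, consistent with the paper's Theorem~\ref{thm:varSpeed2new}, which notes that in that regime preferring $c_k$ can occasionally be better.
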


\begin{proof}
Given the condition~\eqref{eq:orderCond1}, we would like to derive energy 
equations from it. Multiplying the denominators both sides 
in~\eqref{eq:orderCond1}, we get,
\begin{equation}
 (\mu(c_j)-\gamma(c_j))\upsilon(c_k) < (\mu(c_k)-\gamma(c_k))\upsilon(c_j)
\label{ab}
\end{equation}

By adding and subtracting $(\mu(c_j)-\gamma(c_j))\upsilon(c_j)$ in 
equation~\eqref{ab}, we get,
\begin{equation}
 \frac{(\mu(c_j)-\gamma(c_j)) - 
(\mu(c_k)-\gamma(c_k))}{\upsilon(c_j)-\upsilon(c_k)} < 
\frac{(\mu(c_j)-\gamma(c_j))}{\upsilon(c_j)}
 \label{con2}
\end{equation}
From~\eqref{con1} and~\eqref{con2}, we get,
\begin{equation*}
 \frac{(\mu(c_j)-\gamma(c_j)) - 
(\mu(c_k)-\gamma(c_k))}{\upsilon(c_j)-\upsilon(c_k)} < 
\frac{(\mu(c_j)-\gamma(c_j))}{\upsilon(c_j)} < \frac{p_r}{q_r}
 \label{con3}
\end{equation*}

\begin{equation*}
 \frac{(\mu(c_j)-\gamma(c_j)) - 
(\mu(c_k)-\gamma(c_k))}{\upsilon(c_j)-\upsilon(c_k)} < 
\frac{p_r}{q_r}
\end{equation*}

\begin{equation*}
 ((\mu(c_j)-\gamma(c_j)) - (\mu(c_k)-\gamma(c_k)) )q_r < 
p_r(\upsilon(c_j)-\upsilon(c_k))
\end{equation*}

\begin{equation}
 (\mu(c_j)-\gamma(c_j))q_r - (\mu(c_k)-\gamma(c_k)) q_r - 
p_r\upsilon(c_j)+p_r\upsilon(c_k) <0
\label{cd}
\end{equation}

Adding~\eqref{ab} and~\eqref{cd}, we get,
\begin{equation}
\begin{split}
&(\mu(c_j)-\gamma(c_j))q_r - (\mu(c_k)-\gamma(c_k)) q_r - 
p_r\upsilon(c_j)+p_r\upsilon(c_k)\\
& + (\mu(c_j)-\gamma(c_j))\upsilon(c_k) - (\mu(c_k)-\gamma(c_k))\upsilon(c_j) 
<0 
\end{split}
\label{de}
 \end{equation}

Now adding and subtracting $p_rq_r$ in~\eqref{de}, we get,
\begin{equation}
 \frac{p_r+\mu(c_j)-\gamma(c_j)}{q_r+\upsilon(c_j)} < 
\frac{p_r+\mu(c_k)-\gamma(c_k)}{q_r+\upsilon(c_k)}
 \label{eq:condOut1}
\end{equation}
In~\eqref{eq:condOut1}, left hand side gives the energy consumption of the 
system when machine $c_j$ is included in the working set $\mathcal{R}$, while 
the right hand side gives the energy consumption of the system when machine 
$c_k$ is included. The energy consumption is lower with machine $c_j$ in 
comparison to machine $c_k$ and hence machine $c_j$ would be given preference 
to be included in working set. (Note: After including $c_j$ in working set, the condition given in Theorem~\ref{thm2} is checked again).  QED.
\end{proof}

In Theorem ~\ref{thm:varSpeed2}, we considered a special condition of $\upsilon(c_j) > \upsilon(c_k)$. But there can be cases where $\frac{\mu(c_j)-\gamma(c_j)}{\upsilon(c_j)} < 
\frac{\mu(c_k)-\gamma(c_k)}{\upsilon(c_k)}$ but $\upsilon(c_j) \leq \upsilon(c_k)$. In a  rare case out of the aforementioned case, it is possible that choosing machine $c_k$ over $c_j$ will be more energy efficient. 
These special cases are fairly rare hence we would like to use Theorem ~\ref{thm:varSpeed2} for rest of our analysis. But we prove here that even if we choose $c_j$ before $c_k$ while forming our working set, $c_k$ will be definitely included in the working set later on.

\begin{theorem}
\label{thm:varSpeed2new}
Given any two machines $c_j$ and $c_k$ which qualify to be included in working 
set 
$\mathcal{R}$ according to Theorem~\ref{thm2}, i.e.,
\begin{equation}
\frac{\mu(c_j)-\gamma(c_j)}{\upsilon(c_j)} < \frac{\mu(c_k)-\gamma(c_k)}{\upsilon(c_k)} < \frac{p_r}{q_r}
\label{con1new}
\end{equation}
where $\upsilon(c_j) < \upsilon(c_k)$ and,
\begin{equation}
 \frac{p_r+\mu(c_j)-\gamma(c_j)}{q_r+\upsilon(c_j)} > 
\frac{p_r+\mu(c_k)-\gamma(c_k)}{q_r+\upsilon(c_k)}
 \label{eq:condOut1new}
\end{equation}
The optimal working set of machines will contain $c_k$.i.e.
\begin{equation}
 \frac{p_r+\mu(c_j)-\gamma(c_j)+\mu(c_k)-\gamma(c_k)}{q_r+\upsilon(c_j)+\upsilon(c_k)} < 
\frac{p_r+\mu(c_j)-\gamma(c_j)}{q_r+\upsilon(c_j)}
 \label{eq:condOut1new2}
\end{equation}
\end{theorem}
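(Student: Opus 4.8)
The plan is to recognize that the desired inequality \eqref{eq:condOut1new2} is precisely the statement produced by Theorem~\ref{thm2} when that theorem is applied to the enlarged set $\mathcal{R}' = \mathcal{R} \cup \{c_j\}$, whose aggregate parameters are $p_{r'} = p_r + (\mu(c_j)-\gamma(c_j))$ and $q_{r'} = q_r + \upsilon(c_j)$, and to the candidate machine $c_k$. By the derivation of Theorem~\ref{thm2}, adding $c_k$ to $\mathcal{R}'$ strictly lowers the energy---that is, \eqref{eq:condOut1new2} holds---provided $\frac{\mu(c_k)-\gamma(c_k)}{\upsilon(c_k)} < \frac{p_{r'}}{q_{r'}} = \frac{p_r + \mu(c_j)-\gamma(c_j)}{q_r + \upsilon(c_j)}$. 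So the whole theorem reduces to establishing this single ratio inequality.

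To lighten notation I would write $a = \mu(c_j)-\gamma(c_j)$, $u = \upsilon(c_j)$, $b = \mu(c_k)-\gamma(c_k)$, and $v = \upsilon(c_k)$, so that the target becomes $\frac{b}{v} < \frac{p_r+a}{q_r+u}$, i.e., after clearing the positive denominators, $v(p_r+a) - b(q_r+u) > 0$. I would expand this as $(vp_r - bq_r) + (va - bu)$ and collect the three facts available from the hypotheses: from $\frac{b}{v} < \frac{p_r}{q_r}$ in \eqref{con1new} we get $vp_r - bq_r > 0$; from $\frac{a}{u} < \frac{b}{v}$ in \eqref{con1new} we get $bu - av > 0$; and, since $c_j$ itself qualified for $\mathcal{R}$ under Theorem~\ref{thm2}, the relation $\frac{a}{u} < \frac{p_r}{q_r}$ gives $up_r - aq_r > 0$.

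The crux is then to bring in the ``rare-case'' hypothesis \eqref{eq:condOut1new}. Cross-multiplying $\frac{p_r+a}{q_r+u} > \frac{p_r+b}{q_r+v}$ and simplifying yields $p_r(v-u) + q_r(a-b) > bu - av$. The key algebraic observation, which I would verify directly, is the identity $vp_r - bq_r = \big[\,p_r(v-u)+q_r(a-b)\,\big] + \big(up_r - aq_r\big)$. Combining this identity with the inequality just obtained and with $up_r - aq_r > 0$ gives $vp_r - bq_r > bu - av$, which is exactly $v(p_r+a) - b(q_r+u) > 0$. Feeding this back through Theorem~\ref{thm2} yields \eqref{eq:condOut1new2}.

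I expect the main obstacle to be spotting the right decomposition in the third paragraph: the two summands of the target have opposite signs (one positive, one negative), so the argument only closes once \eqref{eq:condOut1new} is rewritten in the additive form $p_r(v-u)+q_r(a-b) > bu-av$ and the identity for $vp_r - bq_r$ is noticed. I would also remark that the hypothesis $\upsilon(c_j) < \upsilon(c_k)$ does not enter the chain as a separate inequality---it only delineates the regime in which \eqref{eq:condOut1new} can occur---so the whole argument rests on the ratio conditions of \eqref{con1new} together with \eqref{eq:condOut1new}.
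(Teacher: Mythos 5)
Your proof is correct, and its skeleton matches the paper's: both arguments pivot on the same intermediate inequality $\frac{\mu(c_k)-\gamma(c_k)}{\upsilon(c_k)} < \frac{p_r+\mu(c_j)-\gamma(c_j)}{q_r+\upsilon(c_j)}$ (the paper's \eqref{eq:condOut1newpp}), and both finish the same way --- the paper redoes the ``cross-multiply, then add the product of the denominators'' manipulation by hand, which is precisely what your cleaner framing accomplishes by invoking Theorem~\ref{thm2} on the enlarged set $\mathcal{R}\cup\{c_j\}$ with $p_{r'}=p_r+\mu(c_j)-\gamma(c_j)$ and $q_{r'}=q_r+\upsilon(c_j)$. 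Where you genuinely diverge is in how that intermediate inequality is obtained. The paper gets it in two lines by transitivity: from $\frac{\mu(c_k)-\gamma(c_k)}{\upsilon(c_k)} < \frac{p_r}{q_r}$ it derives the mediant bound $\frac{\mu(c_k)-\gamma(c_k)}{\upsilon(c_k)} < \frac{p_r+\mu(c_k)-\gamma(c_k)}{q_r+\upsilon(c_k)}$ (its \eqref{eq:condOut1newp}), and then chains this with the hypothesis \eqref{eq:condOut1new}, whose right-hand side is exactly that mediant. You instead cross-multiply \eqref{eq:condOut1new} into the additive form $p_r(v-u)+q_r(a-b) > bu-av$ (in your shorthand $a=\mu(c_j)-\gamma(c_j)$, $u=\upsilon(c_j)$, $b=\mu(c_k)-\gamma(c_k)$, $v=\upsilon(c_k)$) and close via the identity $vp_r-bq_r=\left[p_r(v-u)+q_r(a-b)\right]+(up_r-aq_r)$ together with $up_r-aq_r>0$. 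I verified the identity and the sign facts; all follow from \eqref{con1new}, so the argument is sound. The trade-off: your route needs the extra (available) fact that $c_j$ itself qualifies, i.e.\ $\frac{a}{u}<\frac{p_r}{q_r}$, and it requires spotting the decomposition you yourself flag as the main obstacle, whereas the paper's transitivity chain needs neither --- \eqref{eq:condOut1new} already compares the two mediants directly, so once the self-mediant bound for $c_k$ is written down, the middle step is immediate. Your closing remark that $\upsilon(c_j)<\upsilon(c_k)$ plays no logical role is accurate and matches the paper, whose proof likewise never invokes it; it only delimits the regime in which \eqref{eq:condOut1new} can arise.
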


\begin{proof}
From ~\eqref{con1new},
\begin{equation}
\frac{\mu(c_k)-\gamma(c_k)}{\upsilon(c_k)} < \frac{p_r}{q_r}
\label{con1p}
\end{equation}
Multiplying the denominators both sides,
\begin{equation}
q_r(\mu(c_k)-\gamma(c_k)) <  p_r\upsilon(c_k)
\label{con1pp}
\end{equation}
Multiplying $\upsilon(c_k)(\mu(c_k)-\gamma(c_k))$ both sides,
\begin{equation}
(q_r + \upsilon(c_k))(\mu(c_k)-\gamma(c_k)) <  (p_r + (\mu(c_k)-\gamma(c_k)))\upsilon(c_k)
\label{con1ppp}
\end{equation}
\begin{equation}
 \frac{\mu(c_k)-\gamma(c_k)}{\upsilon(c_k)} < 
\frac{p_r+\mu(c_k)-\gamma(c_k)}{q_r+\upsilon(c_k)}
 \label{eq:condOut1newp}
\end{equation}

Combining ~\eqref{eq:condOut1new} and ~\eqref{eq:condOut1newp},
\begin{equation}
 \frac{\mu(c_k)-\gamma(c_k)}{\upsilon(c_k)} < 
\frac{p_r+\mu(c_j)-\gamma(c_j)}{q_r+\upsilon(c_j)}
 \label{eq:condOut1newpp}
\end{equation}
Multiplying the denominators both sides,
\begin{equation}
(\mu(c_k)-\gamma(c_k))(q_r+\upsilon(c_j)) < (p_r+\mu(c_j)-\gamma(c_j))\upsilon(c_k)
\end{equation}
adding $(p_r+\mu(c_j)-\gamma(c_j))(q_r+\upsilon(c_j)$ both sides and solving,
\begin{equation}
 \frac{p_r+\mu(c_j)-\gamma(c_j)+\mu(c_k)-\gamma(c_k)}{q_r+\upsilon(c_j)+\upsilon(c_k)} < 
\frac{p_r+\mu(c_j)-\gamma(c_j)}{q_r+\upsilon(c_j)}
 \label{eq:condOut1new2ppp}
\end{equation}
Since the energy consumption by including machine $c_k$ is lesser than just including machine $c_j$, machine $c_k$ will always be in the working set.
\end{proof}

For the rest of our paper we stick with Theorem~\ref{thm:varSpeed2} for finding precedence of machines as the case explained in Theorem~\ref{thm:varSpeed2new} is rare and will not affect our analysis. When we put Lemma~\ref{lemm1} and Theorem~\ref{thm:varSpeed2} together, we get 
the order in which machines will get preference to be alloted work. This order 
is same as mentioned in Assumption~\ref{assum:order} in 
Section~\ref{sec:pf}.
Combining Lemma~\ref{lemm1}, 
Theorem~\ref{thm:varSpeed2} and Assumption~\ref{assum:order}, we can state a 
Corollary.

\begin{corollary}
If the machines are indexed in an order such that the first machine is
the one with smallest
\(\frac{\mu(c_i)-\gamma(c_i)+\Gamma}{\upsilon(c_i)}\) and afterwards
in increasing order of $\frac{\mu(c_i)-\gamma(c_i)}{\upsilon(c_i)}$,
and if the working set contains $r$ machines, then for minimal energy
consumption $\mathcal{R} = \{c_1, c_2, \ldots, c_r\}$, where each
machine in this set is working for an equal amount of time.
\end{corollary}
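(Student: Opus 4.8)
The plan is to prove the two assertions of the corollary separately: that the energy-minimal working set of a prescribed size $r$ is the initial segment $\{c_1,\ldots,c_r\}$ under the stated indexing, and that within this set every machine runs for an equal length of time. The second assertion is immediate from the machinery already in place: for any fixed working set $\mathcal{R}$ with $|\mathcal{R}|=r$, Lemma~\ref{lemma:same_time} establishes that the energy-minimal allocation makes all machines work for the common time $W/\sum_{i\in\mathcal{R}}\upsilon(c_i)$, which reduces the system energy to the ratio in~\eqref{eq:EVarTemp2}. Thus, once the set is fixed, equal working time is forced, and the entire problem collapses to identifying the optimal set.

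For the first assertion I would argue by induction on the size $r$ of the working set, mirroring the greedy construction the earlier results were designed to support. For the base case $r=1$, Lemma~\ref{lemm1} shows the single best machine is the one minimizing $\frac{\mu(c_i)-\gamma(c_i)+\Gamma}{\upsilon(c_i)}$, which by Assumption~\ref{assum:order} is exactly $c_1$. For the inductive step, I would assume the energy-minimal set of size $r$ is $\{c_1,\ldots,c_r\}$ and enlarge it by one machine: Theorem~\ref{thm2} identifies which candidates strictly lower the energy when adjoined, and Theorem~\ref{thm:varSpeed2} selects, among the qualifying candidates, the machine with the smallest $\frac{\mu(c_i)-\gamma(c_i)}{\upsilon(c_i)}$. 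By the ordering of Assumption~\ref{assum:order}, the smallest such ratio among the unused machines $c_{r+1},\ldots,c_m$ belongs to $c_{r+1}$, so the enlarged optimal set is $\{c_1,\ldots,c_{r+1}\}$, closing the induction.

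The hard part will be the tie-breaking subtlety the authors have already isolated: Theorem~\ref{thm:varSpeed2} justifies preferring the lower-ratio machine only under the additional hypothesis $\upsilon(c_j)>\upsilon(c_k)$, so in the rare configuration where the lower-ratio machine happens to have the smaller speed, the greedy step is not licensed by that theorem alone. Here I would invoke Theorem~\ref{thm:varSpeed2new}, which guarantees that even when such a machine is temporarily passed over, it is forced into the optimal working set at a later stage; hence the final set of any prescribed size $r$ is still the initial segment $\{c_1,\ldots,c_r\}$, and the corollary is unaffected by any such reordering. Combining this set identification with the equal-time conclusion of Lemma~\ref{lemma:same_time} then yields both claims of the corollary simultaneously.
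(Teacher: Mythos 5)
Your proposal is correct and takes essentially the same route as the paper, which states this corollary without a separate proof and obtains it exactly as you do: Lemma~\ref{lemm1} for the base case, Theorems~\ref{thm2} and~\ref{thm:varSpeed2} for the greedy extension under the ordering of Assumption~\ref{assum:order}, and Lemma~\ref{lemma:same_time} for the equal working times. You even mirror the paper's own treatment of the rare $\upsilon(c_j) \leq \upsilon(c_k)$ anomaly by appealing to Theorem~\ref{thm:varSpeed2new}, so your write-up matches the paper's argument at the same level of rigor.
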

Using this Corollary, \eqref{eq:EVarTemp2} can be re-written as, 
\begin{equation}
 E_{m,r} = W \left[\frac{\sum_{i = 1}^r (\mu(c_i)-\gamma(c_i)) + 
\Gamma}{\sum_{i = 
1}^r \upsilon(c_i)}\right]
\label{eq:EVarTemp5}
\end{equation}
which gives the energy consumption of a system with total work $W$. 

From~\eqref{eq:EVarTemp5} we see, energy is dependent on 
work $W$, but from~\eqref{label:th4eb} and Theorems~\ref{thm2} 
and~\ref{thm:varSpeed2} we know that finding the number of working machines 
(value of $r$) from the given set $\mathcal{C}$ of machines is independent of 
work $W$. Hence the scheduling decisions do not depend upon the quantum of the 
work given to the system.

In this section, we found the precedence amongst machines for allotting work 
based on their energy and speed specifications. The results derived in this 
section are applicable for handling any class of jobs, which means that the 
precedence of machines will remain same irrespective of whether the jobs 
are divisible, non-divisible or whether they have precedence constraints. In 
the next sections we consider different classes of systems based on type of 
jobs.

\section{Algorithms and Complexity}
\label{sec:anc}
Depending upon the jobs to be executed the systems can be classified
into many categories. We have already mentioned those categories in
Section~\ref{sec:pf}.  In the current section we analyse the
complexity of scheduling problem for these various types of systems
and give scheduling algorithms for the same.  We first analyse
systems with identical speed machines and then move ahead to systems
with different speeds.

\subsection{Systems with Identical Speed Machines}
In Section~\ref{subsec:i_sp}, we gave results which govern the
scheduling of systems with identical speed machines. Given the total
amount of work and energy specifications of machines, we can find the
number of machines which have to be assigned work using results of
Section~\ref{subsec:i_sp}.  Algorithm~\ref{algo1} is an implementation
of those results.

Algorithm~\ref{algo1} takes the number of machines, the working power
and idle power of the machines, and the total work to be done as
input, and gives the value of $r$, i.e., the number of working
machines, as output.  The machines are indexed in the precedence
order given by Assumption~\ref{assum:order} in Section~\ref{sec:pf}.
Now, we apply binary search to find the value of $r$, such that we get
minimal energy value of the system.  The algorithm is based on the
previously given results. The computation complexity of
Algorithm~\ref{algo1} is $\mathcal{O}(m)$. We now find the complexity
of scheduling problems and give algorithms for divisible and
non-divisible jobs.

 \begin{algorithm}[htbp]
\footnotesize
\LinesNumbered
\SetKwInOut{Input}{input}\SetKwInOut{Output}{output}
\Input{Number of machines ($m$), working power of machines ($\mu(c_i)$), idle 
power 
of machines ($\gamma(c_i)$), total work to be 
done ($W$)}
\Output{Number of working machines ($r$), makespan ($T$)}
\BlankLine
\For{$i = 1$ to $m$}{
calculate $\mu(c_i) - \gamma(c_i)$
}
sort $(\mu(c_i) - \gamma(c_i))$\;
$low \leftarrow 1$, $high \leftarrow m$\;
\While{$low \leq high$}{
$mid \leftarrow low + \left \lfloor \frac{(high - low)}{2} \right \rfloor$ \;
// $E(k) \leftarrow (\frac{W}{k})[\sum_{i = 1}^{k}\mu(c_i) + \sum_{i = 
k+1}^{m}\gamma(c_i)]$\;
Calculate $E(mid-1), E(mid), E(mid +1)$\;
\eIf{$(E(mid-1) > E(mid)) \&\& (E(mid) \leq E(mid +1))$}
{$r \leftarrow mid$\;
return $r$\;}
{\eIf{$E(mid-1) > E(mid)$}
{$low \leftarrow  mid$\;}{$high \leftarrow mid$\;}}}
 $T \leftarrow \frac{W}{r}$\;
\caption{Find the number of working machines of the system in order to
  minimize the energy consumption when the speeds of machines are
  identical}
\label{algo1}
\end{algorithm}

\subsubsection{Divisible Jobs}
\label{subsub:id_div}
In this case we are given a set $\mathcal{P}$ of jobs, where jobs can
be arbitrarily broken into any number of smaller jobs and these
smaller jobs can be executed in parallel on different machines. 
We take this assumption to simplify the analysis.
We consider the energy consumption overhead due to division/breaking of
jobs to be null. Since the jobs are divisible, they can be
trivially distributed over $r$ machines with makespan $T$, where $r$
and $T$ are given by Algorithm~\ref{algo1}.

Hence, the energy of system with divisible jobs can be calculated using 
following equation.

\begin{equation}
 E_{m,r} = T[\sum_{i = 1}^{r}\mu(c_i) + \sum_{i = r+1}^{m}\gamma(c_i)]
 \label{eq:eqe}
\end{equation}

Using Algorithm~\ref{algo1} and~\eqref{eq:eqe} we can find the energy
of the system.

\begin{remark}
Energy-minimal scheduling of divisible jobs on identical speed
machines can be done in linear time.
\end{remark}

We now analyse the complexity of scheduling of non-divisible jobs.

\subsubsection{Non-divisible Jobs}
\label{subsub:id_nd}
Given the set $\mathcal{P}$ of non-divisible jobs and time required
to execute job $p_j$ on a machine with unit speed $\psi(p_j)$, our aim
is to distribute the set $\mathcal{P}$ of jobs among the given set
$\mathcal{C}$ of machines such that energy consumption is
minimum. From Lemma~\ref{lemma:same_time} it is evident that
whichever machines are chosen to work, they should be working for an
equal amount of time to achieve energy minimality.  But it is not
straightforward or sometimes even possible to distribute work in such
a way when jobs are non-divisible.  We prove that it is an NP-hard
problem to calculate the minimal-energy schedule for the current case.

\begin{proposition} 
\label{thmNPIdentical} 
In a system where machines have identical speeds and jobs are
non-divisible, computing the energy-minimal schedule is an NP-hard
problem.
\end{proposition}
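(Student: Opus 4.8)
The plan is to prove NP-hardness by reduction from a well-known NP-hard problem, realizing energy-minimal scheduling as a strict generalization of makespan minimization on identical parallel machines ($P\|C_{\max}$). The excerpt already observes that the case $\mu(c_i)=\gamma(c_i)$ collapses the energy objective onto makespan; I would turn this remark into a formal reduction. The announced approximation ratio of $\tfrac{4}{3}$ for this class is exactly the asymptotic worst-case bound of the Longest-Processing-Time rule for $P\|C_{\max}$, which strongly suggests that the hard core is makespan minimization, so I would reduce from \textsc{Partition}.

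First I would construct the instance. Given a \textsc{Partition} instance of positive integers $a_1,\dots,a_n$ with $\sum_{j=1}^{n} a_j = 2S$, I build an energy instance with $m=2$ identical-speed machines, $\upsilon(c_1)=\upsilon(c_2)=1$, equal working and idle powers $\mu(c_1)=\mu(c_2)=\gamma(c_1)=\gamma(c_2)=1$, and $n$ non-divisible jobs of weights $\psi(p_j)=a_j$. This construction is plainly polynomial. Substituting $\mu(c_i)=\gamma(c_i)$ into \eqref{label:energy} gives
\[
E=\sum_{i=1}^{m}\mu(c_i)\,T = mT = 2T,
\qquad T=\max\bigl(\tau(c_1),\tau(c_2)\bigr),
\]
so the energy is a fixed positive multiple of the makespan and depends on the assignment only through $T$. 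Since $\upsilon(c_i)=1$ yields $\tau(c_i)=w(c_i)$ and $\tau(c_1)+\tau(c_2)=W=2S$, we always have $T\ge S$, with $T=S$ attainable exactly when the jobs split into two subsets of equal weight $S$. Hence the energy-minimal schedule achieves $E=2S$ if and only if the \textsc{Partition} instance is a yes-instance, so any polynomial-time algorithm for energy-minimal scheduling would decide \textsc{Partition}, establishing NP-hardness.

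The one point deserving care is that the energy model grants freedom absent from $P\|C_{\max}$: one may choose how many machines $r\le m$ receive work. I expect this to be the main obstacle to state cleanly, since it is the place where the extra generality of the energy objective must be shown not to weaken the reduction. For the $m=2$ instance above it is harmless and immediate: loading only one machine forces $T=2S$ and thus $E=4S>2S$, so every energy-optimal schedule uses both machines and balances their working times, exactly as Theorem~\ref{th5} predicts. Minimizing energy on this slice therefore coincides with minimizing the two-machine makespan, and confirming this equivalence is the step I would argue most explicitly.

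Finally, I would note that the same template yields \emph{strong} NP-hardness for arbitrary $m$ by reducing from \textsc{3-Partition} instead of \textsc{Partition}: with $\mu(c_i)=\gamma(c_i)$ and all $\upsilon(c_i)=1$, energy minimization over $m$ identical machines is identical to $P\|C_{\max}$, which is strongly NP-hard. This stronger statement is the one consistent with an LPT-based $\tfrac{4}{3}$-approximation, since Graham's $\bigl(\tfrac{4}{3}-\tfrac{1}{3m}\bigr)$ guarantee holds for every $m$; for the present proposition, however, the \textsc{Partition} reduction on two machines already suffices.
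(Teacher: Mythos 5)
Your proof is correct, and it differs from the paper's in a substantive way. The paper's own argument (present in the source but suppressed from the compiled text by the \verb|\eat| macro, so the proposition effectively appears unproved) observes that the scheduling problem amounts to forming exclusive subsets $P_i$ of jobs each summing to the target $T$, and then concludes hardness because ``our problem is reduced to subset-sum problem.'' As written, that reduction points in the wrong direction: reducing a problem \emph{to} an NP-hard problem establishes nothing, since every problem in NP reduces to subset-sum. Read charitably, the paper is gesturing at the same underlying equivalence with number partitioning, but it never exhibits a reduction \emph{from} a hard problem, which is what NP-hardness requires. Your proposal supplies exactly that: you reduce \textsc{Partition} to energy-minimal scheduling via the $\mu(c_i)=\gamma(c_i)$ degeneration, under which the paper's own Section~3 computation gives $E=\sum_i \mu(c_i)T = 2T$, so the energy objective depends on the assignment only through the makespan; the yes/no question ``is the optimal energy $2S$?'' then decides \textsc{Partition}. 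You also correctly isolate and dispatch the one genuine subtlety the energy model introduces --- the freedom to leave machines idle --- by noting that loading a single machine forces $E=4S>2S$, so the extra generality cannot collapse the reduction. Your closing remark that \textsc{3-Partition} yields \emph{strong} NP-hardness for general $m$ is a worthwhile strengthening the paper does not claim, and it is the version genuinely consistent with the paper's $\frac{4}{3}$-approximation guarantee, since plain \textsc{Partition} is only weakly NP-hard (it admits a pseudopolynomial dynamic program, so weak hardness alone would leave open efficient exact solvability for polynomially bounded job weights). In short: same intuition as the paper (hardness comes from balancing non-divisible weights), but your write-up is the logically sound reduction the paper's sketch lacks.
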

\eat{
\begin{proof}
Let $r$ denote the number of machines which are given work and $T$
is the makespan according to Algorithm~\ref{algo1}. Since all the
machines have identical speed, the scheduling problem is to make
exclusive subsets $P_i$ from set $\mathcal{P}$ such that the sum of
work in set $P_i$ is $T$, i.e.,
\begin{equation}
\sum_{j \in P_i} \psi(p_{j}) = T  \quad \forall i; 1\leq i \leq r
\label{eq:subset1}
\end{equation}
Given a finite set of positive numbers and another positive number
called the goal, to find the subset whose sum is closest to the goal
is well known as the classical \emph{subset-sum
  problem}~\cite{Johnson1973}.  In the current scenario, the set of
positive numbers $\{\psi(p_j): j \in \mathcal{P} \}$ has to be
distributed in subsets $P_i$ such that sum of each subset is the
positive number $T$. For each subset ($P_i$ ; $i \leq r$), this
problem of finding the exclusive subsets $P_i$ can be seen as the
classical \emph{subset-sum problem}. Since the subset-sum problem is
NP-hard~\cite{Johnson1973}, and our problem is reduced to subset-sum
problem; energy-minimal scheduling of jobs in a system where jobs are
non-divisible is NP-hard. QED.
\end{proof}
}

Since finding the energy-minimal schedule is NP-hard, we develop an
approximation algorithm to find energy efficient schedules. Though we
strive to achieve the makespan given by Algorithm~\ref{algo1}, in a
system with non-divisible jobs it is not always possible to achieve
it.  We state the theorem which specifies the ideal makespan $T_o$ and
working set of machines $\mathcal{R}_o$ for energy minimality of
system with non-divisible jobs.

\begin{lemma}
Given a system with identical speed machines and a set $\mathcal{P}$
of jobs with longest job of length $\psi(p_{max})$, the makespan is
given by $T_o = \max(T,\psi(p_{max}))$ and the working set of machines
$\mathcal{R}_o = \{c_1, c_2, \ldots, c_{r_o}\}$, where
\begin{equation}
r_o = \left \lceil \frac{W}{T_o}\right \rceil
\label{eq:newRo}
\end{equation}
\end{lemma}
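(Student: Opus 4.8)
The plan is to pin down $T_o$ via two independent lower bounds on any feasible makespan, show their maximum is simultaneously forced and (as a target) sufficient, and then recover $r_o$ by a counting argument. First I would establish the longest-job bound: since jobs are non-divisible, the job of length $\psi(p_{max})$ must run uninterrupted on a single machine, so that machine is in the working state for at least $\psi(p_{max})$, whence by~\eqref{label:tmax} every schedule has makespan at least $\psi(p_{max})$, i.e. $T_o \geq \psi(p_{max})$. Second I would establish the energy-optimality bound: viewing $E_{m,r}$ from~\eqref{eq:eqe} as a function of the number of working machines (with $T=W/r$), it is unimodal with minimum at the value $r$ returned by Algorithm~\ref{algo1} — this is exactly what makes the binary search in Algorithm~\ref{algo1} valid, and it corresponds to the inclusion condition of Theorem~\ref{th4} holding while $r$ grows toward the optimum and failing afterward. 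Hence an energy-minimal schedule uses at most $r$ machines, and with at most $r$ machines of unit speed processing total work $W$, the makespan is at least $W/r = T$, so $T_o \geq T$. Combining gives $T_o \geq \max(T,\psi(p_{max}))$.

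Next I would argue equality by a short case split. If $\psi(p_{max}) \leq T$, the divisible optimum $T = W/r$ already respects the longest-job constraint, so nothing larger is needed and $T_o = T$; correspondingly $r_o = \lceil W/T \rceil = r$, recovering the divisible solution. If $\psi(p_{max}) > T$, any makespan below $\psi(p_{max})$ is infeasible, so $T_o = \psi(p_{max})$; moreover we are now forced into the regime of fewer than $r$ working machines, where $E_{m,r}$ is strictly decreasing in $r$, so energy is minimized by taking as many working machines as the makespan allows, i.e. by choosing the smallest feasible makespan $\psi(p_{max})$. Either way $T_o = \max(T,\psi(p_{max}))$. Finally, fixing the makespan at $T_o$, each machine does at most $T_o$ units of work, so by pigeonhole $r_o$ machines can absorb $W$ only if $r_o T_o \geq W$, giving $r_o \geq W/T_o$ and, by integrality, $r_o \geq \lceil W/T_o \rceil$; using more machines only adds idle energy, so the ideal count is exactly $r_o = \lceil W/T_o \rceil$, and by Lemma~\ref{lem2} these are the lowest-indexed machines $\{c_1,\ldots,c_{r_o}\}$.

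The hard part will be the second lower bound, since it leans on the unimodality of $E_{m,r}$ in $r$, which the excerpt uses inside Algorithm~\ref{algo1} but does not isolate as a standalone fact; I would make this explicit by appealing to Theorem~\ref{th4} for the decreasing phase and its failure for the increasing phase. A related subtlety is that the equal-working-time optimum of Corollary~\ref{cor:equal_time} cannot be attained exactly once $r_o T_o > W$, so I would stress that $T_o$ and $\mathcal{R}_o$ are stated as \emph{ideal} (lower-bound) targets the approximation algorithm aims at, not as the parameters of an exactly equal-time schedule, which reconciles the use of $\lceil W/T_o \rceil$ with the non-divisibility of the jobs.
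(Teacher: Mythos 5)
Your proposal follows the same skeleton as the paper's proof---two lower bounds on the makespan ($\psi(p_{max})$ from non-divisibility, $T$ from energy optimality), their maximum taken as $T_o$, a counting step for $r_o$, and Lemma~\ref{lem2} for the identity of the working set---but you supply arguments where the paper merely asserts. The paper's proof is three sentences: a makespan below $\psi(p_{max})$ is impossible, $T_o < T$ ``cannot be energy-minimal'' (stated without justification), and $r_o = \lceil W/T_o \rceil$ is read off from Corollary~\ref{cor:equal_time}. Your derivation of $r_o$ is genuinely different and arguably sounder: with the makespan pinned at $T_o$, pigeonhole gives $r_o \geq \lceil W/T_o \rceil$, and the index order shows that shifting work to higher-indexed machines only raises $\sum_i (\mu(c_i)-\gamma(c_i))\tau(c_i)$ (note it is this working-energy differential that grows, not idle energy---the idle term $\Gamma T_o$ is fixed because all $m$ machines stay on). This sidesteps the paper's appeal to equal distribution, which is unattainable whenever $W/T_o$ is non-integral, and your closing caveat that $T_o$ and $\mathcal{R}_o$ are ideal lower-bound targets rather than parameters of an achievable schedule makes explicit what the paper leaves implicit.

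One step of your added rigor overreaches, however. The claim that an energy-minimal non-divisible schedule uses at most $r$ machines (hence has makespan at least $W/r = T$) is false in general: unimodality of $E_{m,r'}$ holds for the divisible relaxation with equal loads, but integrality can make it strictly beneficial to use more than $r$ machines with makespan below $T$. For instance, take $m=3$ with $\mu(c_1)-\gamma(c_1)=\mu(c_2)-\gamma(c_2)=0$, $\mu(c_3)-\gamma(c_3)=1.1$, $\Gamma = 2$, and jobs of sizes $4,3,3$ (so $W=10$): the divisible optimum is $r=2$, $T=5$, yet the best two-machine schedule has energy $12$ (loads $6,4$), while loading all three machines as $(4,3,3)$ gives makespan $4 < T$ and energy $11.3$. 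So $T_o \geq T$ does not bound actual optimal schedules; it holds only in the target sense, namely that the relaxed ideal whose energy lower-bounds every schedule has makespan $T$. Since the paper asserts this same step with no proof at all, this does not put you behind the paper, but your unimodality argument should be recast as defining the target rather than constraining the optimum---which is precisely what your final paragraph already does, so the fix is a matter of routing that caveat into the second bound rather than appending it at the end.
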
   

\begin{proof}
It is not possible to have makespan lower than $\psi(p_{max})$ since
jobs cannot be divided. Also, if we take $T_o$ less than $T$, then
we cannot have an energy-minimal schedule. Hence for energy minimality,
\begin{equation}
T_o = \max(T,\psi(p_{max}))
\label{eq:c4}
\end{equation}
From Corollary~\ref{cor:equal_time}, it is evident that work should be
distributed equally if possible to the working machines. Hence with a
possibly increased makespan, the number of working machines might get
decreased and given by~\eqref{eq:newRo}. Also from Lemma~\ref{lem2},
it is evident that the working set of machines is given by:
\begin{equation}
\mathcal{R}_o = \{c_1, c_2, \ldots, c_{r_o}\}
\label{eq:c5}
\end{equation}  
QED.
\end{proof}

\begin{algorithm}[htbp]
\footnotesize
\LinesNumbered
\SetKwInOut{Input}{input}\SetKwInOut{Output}{output}
\Input{Number of machines ($m$), working power of machines ($\mu(c_i)$), idle 
power 
of machines ($\gamma(c_i)$), total work to be 
done ($W$), $\mathcal{P}$, $\psi(p_j)$, $r_o$, $T_o$}
\Output{Makespan ($T$), energy ($E$)}
\BlankLine
$S \leftarrow \{ \}$ \;
\For{$i = 1$ to $m$}{
$d_i  \leftarrow \mu(c_i) - \gamma(c_i)$ \;
$S \leftarrow S \cup \{d_i\} $ \;
}
sort $S$\;
$J \leftarrow \{ \}$ \;
\For{$j = 1$ to $n$}{
$J \leftarrow J \cup \{\psi(p_j)\} $ \;
}
reverse-sort $(J)$\;
$r \leftarrow r_o$\;
$P_j \leftarrow \{ \}$\;
\For{$j=1$ to $r$}{
$P_j \leftarrow P_j \cup \{p_j\} $\;
$ b_j \leftarrow \sum_{j \in P_j} \psi(p_j) $\;
}
\For{$l=r+1$ to $n$}{
calculate $\min(b_j)$\;
$ b_u \leftarrow \min(b_j)$\;
$ u \leftarrow j $ \;

$P_j \leftarrow P_j \cup \{p_l\} $\;
$b_j \leftarrow \sum_{j \in P_j} \psi(p_j) $\;
}
\For{$i=1$ to $r$}{
assign $p_i $ to $c_i$ \quad $(\forall p_i \in P_i)$ \;
}
\For{$i=1$ to $r$}{
$\tau(c_i) \leftarrow \sum_{j \in P_j} \psi(p_j) $\;
}
sort $\tau(c_i)$\;
$T \leftarrow \max(\tau(c_i))$\;
\For{$i=1$ to $r$}{
$\kappa(c_i) \leftarrow  T-\tau(c_i)$\;
}
$T \leftarrow \max \tau(c_i)$\quad $(\forall p_i \in P_i)$ \;
$E \leftarrow \sum_{i = 1}^{r}\mu(c_i)\tau(c_i) + \sum_{i = 1}^{r}\gamma(c_i)\kappa(c_i) + T\sum_{i = 
r+1}^{m}\gamma(c_i)$\;
\caption{Approximation algorithm for energy-efficient scheduling of
  the system when the speeds of machines are identical}
\label{algo3}
\end{algorithm}

It is not always possible to keep each machine working till time
$T_o$. Some machines might work for more than $T_o$ and some might get
less than that because of non-divisibility of jobs. From
Lemma~\ref{lem1}, we know that the amount of work should be in decreasing order
by the machine's indices. Taking all this information into consideration, we 
have devised 
the approximation algorithm~\ref{algo3} which is an adaptation of list-scheduling in order of non-decreasing processing times for energy-minimal scheduling 
of non-divisible jobs.

In Algorithm~\ref{algo3}, we index the machines in non-decreasing
order of the difference of their working power and idle power. Jobs
are indexed in non-increasing order of their weight. Initially we
create $r$ empty buckets for jobs to be filled in. First $r$ jobs are
chosen and given to each $r$ bucket, such that each bucket has one
element. Now, to distribute remaining $n-r$ jobs to the buckets we
follow a sequential process.  The weighted sum of jobs of each bucket is
calculated and the next job
is included in the bucket in which the value is least. The same process is 
repeated until
all jobs are included in one or other bucket or sets. Now we index the
buckets in non-increasing order of their total weights. We assign the
jobs of set $i$ to machine $c_i$. Now the working time of all the
machines is calculated. Hence we can calculate the makespan, which is
the maximum working time of all the machines.

This approximation algorithm is inspired by Graham's 
algorithm~(\cite{graham1969})
which arranges a set of independent
non-divisible jobs on identical speed machines such that the makespan
is minimum. We made some changes in the algorithm to mould it for
different power specification of machines. We now calculate the bound
on deviation from ideal energy consumption due to approximation.

\begin{theorem}
The maximum possible ratio of energy consumption using Algorithm~\ref{algo3} 
and ideal energy consumption is given by:
\begin{equation}
\frac{{E^*}_{max}}{E_o} = 1 + 
\frac{({\frac{4}{3}}-{\frac{1}{3r_o}}-1)\Gamma}{\sum_{i = 1}^{r_o} 
(\mu(c_i)-\gamma(c_i)) + \Gamma}
\label{eq:b1}
\end{equation}
\end{theorem}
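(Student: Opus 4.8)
The plan is to write the energy produced by Algorithm~\ref{algo3} in closed form, compare it term by term with the ideal energy $E_o$, and show that the only source of loss is the idle-power term, which inflates by exactly the Graham list-scheduling factor. First I would take the final energy assignment of Algorithm~\ref{algo3}, namely $E^{*}=\sum_{i=1}^{r_o}\mu(c_i)\tau(c_i)+\sum_{i=1}^{r_o}\gamma(c_i)\kappa(c_i)+T^{*}\sum_{i=r_o+1}^{m}\gamma(c_i)$, where $T^{*}=\max_i\tau(c_i)$ is the makespan actually realised on the $r_o$ working machines. Substituting $\kappa(c_i)=T^{*}-\tau(c_i)$ and collecting the idle contributions collapses this to
\begin{equation*}
E^{*}=\sum_{i=1}^{r_o}(\mu(c_i)-\gamma(c_i))\tau(c_i)+T^{*}\Gamma .
\end{equation*}
For the reference value I would rewrite~\eqref{eq:eqe} in the same shape, giving $E_o=T_o\left[\sum_{i=1}^{r_o}(\mu(c_i)-\gamma(c_i))+\Gamma\right]$, so that the ratio is governed by how $\sum(\mu(c_i)-\gamma(c_i))\tau(c_i)$ and $T^{*}\Gamma$ compare against $T_o\sum(\mu(c_i)-\gamma(c_i))$ and $T_o\Gamma$ respectively.

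Next I would bound the two pieces separately. For the work-dependent term the machines are indexed so that $\mu(c_i)-\gamma(c_i)$ is non-decreasing (Assumption~\ref{assum:order}), while Lemma~\ref{lem1} forces the algorithm's loads $\tau(c_i)$ to be non-increasing; the two sequences are oppositely ordered, so Chebyshev's sum inequality yields $\sum_{i=1}^{r_o}(\mu(c_i)-\gamma(c_i))\tau(c_i)\le\tfrac{1}{r_o}\bigl(\sum_{i=1}^{r_o}(\mu(c_i)-\gamma(c_i))\bigr)\bigl(\sum_{i=1}^{r_o}\tau(c_i)\bigr)=\tfrac{W}{r_o}\sum_{i=1}^{r_o}(\mu(c_i)-\gamma(c_i))\le T_o\sum_{i=1}^{r_o}(\mu(c_i)-\gamma(c_i))$, using $\sum_i\tau(c_i)=W$ and $W/r_o\le T_o$. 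Hence the work term never exceeds its ideal value, and all of the excess must come from the makespan. For that I would invoke the Longest-Processing-Time analysis of Graham~\cite{graham1969}: since Algorithm~\ref{algo3} places the jobs longest-first onto the $r_o$ machines, its makespan satisfies $T^{*}\le\bigl(\tfrac{4}{3}-\tfrac{1}{3r_o}\bigr)T_o$.

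Combining the two estimates gives $E^{*}\le T_o\sum_{i=1}^{r_o}(\mu(c_i)-\gamma(c_i))+\bigl(\tfrac{4}{3}-\tfrac{1}{3r_o}\bigr)T_o\Gamma$, and dividing by $E_o=T_o\left[\sum_{i=1}^{r_o}(\mu(c_i)-\gamma(c_i))+\Gamma\right]$ makes the common factor $T_o$ cancel, leaving
\begin{equation*}
\frac{E^{*}_{max}}{E_o}\le\frac{\sum_{i=1}^{r_o}(\mu(c_i)-\gamma(c_i))+\bigl(\tfrac{4}{3}-\tfrac{1}{3r_o}\bigr)\Gamma}{\sum_{i=1}^{r_o}(\mu(c_i)-\gamma(c_i))+\Gamma}=1+\frac{\bigl(\tfrac{4}{3}-\tfrac{1}{3r_o}-1\bigr)\Gamma}{\sum_{i=1}^{r_o}(\mu(c_i)-\gamma(c_i))+\Gamma},
\end{equation*}
which is exactly~\eqref{eq:b1}.

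The hard part will be the makespan estimate $T^{*}\le(\tfrac{4}{3}-\tfrac{1}{3r_o})T_o$: Graham's classical bound is stated relative to the optimal non-divisible makespan, whereas here it must be read against $T_o=\max(T,\psi(p_{max}))$, so I would need to verify that $T_o$ can play the role of the optimal-makespan lower bound in Graham's argument. The easy regime $\psi(p_{max})\le T_o/3$ follows directly from the generic list-scheduling inequality $T^{*}\le W/r_o+(1-1/r_o)\psi(p_{max})$, while the residual regime relies on the fact that the longest-first rule is exact when only few jobs exceed $T_o/3$. A secondary point is the word \emph{maximum} in the statement: the Chebyshev step and the Graham step are each tight only on special instances, so to justify reading~\eqref{eq:b1} as an attained maximum rather than an upper bound I would exhibit a family of inputs on which both steps are simultaneously (nearly) tight.
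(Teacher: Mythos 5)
Your proposal follows the same skeleton as the paper's proof: collapse the algorithm's output to $E^{*}=\sum_{i=1}^{r_o}(\mu(c_i)-\gamma(c_i))\tau^{*}(c_i)+\Gamma T^{*}$, compare against $E_o=T_o\bigl[\sum_{i=1}^{r_o}(\mu(c_i)-\gamma(c_i))+\Gamma\bigr]$, argue that the work-dependent term never exceeds its ideal value so that only the idle term inflates, apply the Graham LPT factor $\frac{4}{3}-\frac{1}{3r_o}$ to the makespan, and finish with the same algebra leading to \eqref{eq:b1}. The one genuine difference is how the work term is controlled. The paper does it by \emph{assuming} a restricted deviation profile --- all machines at $T_o$ except two machines $c_j,c_k$ perturbed by $\pm\epsilon$ --- and observing that the coefficient $(\mu(c_j)-\gamma(c_j))-(\mu(c_k)-\gamma(c_k))$ is nonpositive, so the extremal case has it equal to zero. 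Your Chebyshev argument (the sequence $\mu(c_i)-\gamma(c_i)$ is non-decreasing by Assumption~\ref{assum:order} while the algorithm sorts bucket loads so that $\tau^{*}(c_i)$ is non-increasing, and $\sum_i\tau^{*}(c_i)=W$ with $W/r_o\le T_o$) handles an \emph{arbitrary} load vector produced by the algorithm, not just the two-machine exchange, so on this step your argument is strictly more rigorous and more general than the paper's.

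On the makespan step, the concern you flag as ``the hard part'' is exactly where the paper is weakest: it invokes Graham's bound as the equality $T^{*}/T_o=\frac{4}{3}-\frac{1}{3r_o}$ in \eqref{eq:LPTbound}, but Graham's ratio is relative to the optimal non-divisible makespan, whereas $T_o=\max(T,\psi(p_{max}))$ is a divisible-relaxation lower bound that can sit strictly below it, so the inequality does not transfer in the needed direction. Your suggested patch for the residual regime does not close this: when the long jobs force LPT to be exact, $T^{*}=T_{opt}$, but $T_{opt}/T_o$ can itself exceed the factor. Concretely, take $r_o+1$ unit jobs on $r_o$ identical machines with $\mu(c_i)=\gamma(c_i)$: then $T_o=(r_o+1)/r_o$ while $T^{*}=T_{opt}=2$, giving $T^{*}/T_o=2r_o/(r_o+1)$, which exceeds $\frac{4}{3}-\frac{1}{3r_o}$ for every $r_o\ge 2$ (and tends to $2$, not $\frac{4}{3}$). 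So the $T_o$-relative Graham estimate that both you and the paper rely on is false as stated; the derivation goes through only if the ideal is measured against the optimal non-divisible makespan rather than $T_o$. Your instinct to demand a verification of this step, and to treat the stated ``maximum'' as an upper bound pending tight instances, identifies a real gap --- one the paper's own proof silently elides rather than resolves.
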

\begin{proof}
From~\eqref{label:energy}, if all the $r_o$ machines are working for
an equal amount of time $T_o$, then energy consumption is given by:
\begin{equation}
E_o = T_o[\sum_{i = 1}^{r_o} (\mu(c_i) - \gamma(c_i)) + \Gamma]
\label{eq:idealEn}
\end{equation}
From Algorithm~\ref{algo3}, the energy consumed is given by:
\begin{equation}
E^* = \sum_{i = 1}^{r_o} (\mu(c_i) - \gamma(c_i)){\tau^*}(c_i) + \Gamma T^*
\label{eq:algoOut}
\end{equation}
Assume that the algorithm gave such ${\tau^*}_i$ that all the machines 
other than $c_j$ and $c_k$ are allotted work equal to $T_o$. Mathematically,
\begin{equation}
{\tau^*}(c_j) = T_o + \epsilon,\quad {\tau^*}(c_k) = T_o - \epsilon
\label{eq:jk}
\end{equation}
where $j < k \leq r_o$. And
\begin{equation}
{\tau^*}(c_i) = T_o,\quad \forall i \neq j,k,\quad i \leq r_o 
\label{eq:inotjk}
\end{equation}
Since $j < k$, according to our algorithm ${\tau^*}(c_j) > {\tau^*}(c_k)$ and 
so $\epsilon \geq 0$. 
Using these values from~\eqref{eq:jk} and \eqref{eq:inotjk} 
in~\eqref{eq:algoOut}, 
we get,
\begin{equation}
 \begin{split}
 E^* = &(\mu(c_j) - \gamma(c_j))(T_o + \epsilon) + (\mu(c_k) - \gamma(c_k))(T_o 
- 
\epsilon)\\ + &\sum_{i \neq j,k} (\mu(c_i) - \gamma(c_i))T_o + \Gamma T^*
 \end{split}
\label{eq:algoOut1}
\end{equation}
Simplifying above,
\begin{equation}
E^* = (\mu(c_j) - \gamma(c_j)- (\mu(c_k) - \gamma(c_k)))\epsilon + \sum_{i = 
1}^{r_o} (\mu(c_i) - \gamma(c_i))T_o + \Gamma T^*
\label{eq:algoOut2}
\end{equation}
Now since $j < k$, we have $\mu(c_j) - \gamma(c_j)- (\mu(c_k) - \gamma(c_k)) 
\leq 0$. To compute the bound we are looking for the case where $E^*$ is 
maximum. That will occur when
\begin{equation}
\mu(c_j) - \gamma(c_j)- (\mu(c_k) - \gamma(c_k)) = 0
\label{eq:cond0}
\end{equation}
Using~\eqref{eq:cond0} in~\eqref{eq:algoOut2}, we get,
\begin{equation}
{E^*}_{max} = \sum_{i = 1}^{r_o} (\mu(c_i) - \gamma(c_i))T_o + \Gamma T^*
\label{eq:algoMax}
\end{equation}
Now according to~\cite{graham1969}, the bound on makespan is given 
by:
\begin{equation}
\frac{T^*}{T_o} = \frac{4}{3}-\frac{1}{3r_o}
\label{eq:LPTbound}
\end{equation} 
Using this bound in~\eqref{eq:algoMax}, we get
\begin{equation}
{E^*}_{max} = \sum_{i = 1}^{r_o} (\mu(c_i) - \gamma(c_i))T_o + \Gamma T_o 
\left(\frac{4}{3}-\frac{1}{3r_o}\right)
\label{eq:algoMax1}
\end{equation}
Simplifying above
\begin{equation}
{E^*}_{max} = T_o\left(\sum_{i = 1}^{r_o} (\mu(c_i) - \gamma(c_i)) + \Gamma  
\left(\frac{4}{3}-\frac{1}{3r_o}\right)\right)
\label{eq:algoMax11}
\end{equation}
\begin{equation}
{E^*}_{max} = T_o\left(\sum_{i = 1}^{r_o} (\mu(c_i) - 
\gamma(c_i)) + \Gamma\right) + T_o 
\Gamma\left(\frac{4}{3}-\frac{1}{3r_o} - 1\right)
\label{eq:algoMax12}
\end{equation}
Substituting the values, we get,
\begin{equation}
\frac{{E^*}_{max}}{E_o} = \frac{T_o(\sum_{i = 1}^{r_o} (\mu(c_i) - \gamma(c_i)) 
+ \Gamma]) + T_o \Gamma(\frac{4}{3}-\frac{1}{3r_o} - 1)}{T_o[\sum_{i = 1}^{r_o} 
(\mu(c_i) - \gamma(c_i)) + \Gamma]}
\label{eq:ib1}
\end{equation}
\begin{equation}
\frac{{E^*}_{max}}{E_o} = 1 + \frac{\Gamma(\frac{4}{3}-\frac{1}{3r_o} - 
1)}{\sum_{i = 1}^{r_o} (\mu(c_i) - \gamma(c_i)) + \Gamma}
\label{eq:bound1}
\end{equation}
QED.
\end{proof}

In the worst case, the values of working power and idle power are
equal, i.e.,  $\mu(c_i) = \gamma(c_i); \forall i $, so the bound is
given by,
\begin{eqnarray*}
\frac{{E^*}_{max}}{E_o}& = &1 + \frac{\Gamma(\frac{4}{3}-\frac{1}{3r_o} - 
1)}{\sum_{i = 1}^{r_o} (\mu(c_i) - \gamma(c_i)) + \Gamma}\\
 & = &1 + \frac{\Gamma(\frac{4}{3}-\frac{1}{3r_o} - 
1)}{\Gamma} \\
 & = &1 + (\frac{4}{3}-\frac{1}{3r_o} - 1) \\
  & = &\frac{4}{3}-\frac{1}{3r_o} 
\end{eqnarray*}
If $r_o$ is very large, i.e., when $r_o \to \infty$, then,
\begin{equation}
\frac{{E^*}_{max}}{E_o} = \frac{4}{3}
\label{eq:bound1num}
\end{equation}

\eqref{eq:bound1} gives the lower bound of inefficiency of our 
algorithm and~\eqref{eq:bound1num} gives the upper bound.

We now move ahead to analyse the class of scheduling problems in which
machines can have different working speeds.

\subsection{Systems with Different Speed Machines}

In this section, along with power specifications of machines we also
consider the speeds of machines to be different and give algorithms
for both divisible jobs and non-divisible jobs. In
Section~\ref{subsec:var_sp}, we proved various lemmas and theorems
which govern the scheduling of systems with different speed
machines. Using these results Algorithm~\ref{algo2} finds the set of
machines which should be assigned work for the energy-minimal
scheduling when the machines of the system have different speeds.

Algorithm~\ref{algo2} takes the number of machines, the working power,
idle power and speeds of the machines, and the total work to be done as
input, and gives the value of $r$, i.e., the number of working
machines, as output. In the algorithm, we first index the machines in
their precedence order, as given in Assumption~\ref{assum:order} in
Section~\ref{sec:pf}. Then to calculate the value of $r$, we check the
condition in~\eqref{eq:con1} from Theorem~\ref{thm2}. Then from the
given values of total work to be done and the computed value of number
of machines, we calculate the makespan. The algorithm also computes
the amount of work that has to be given to respective machines. The
complexity of Algorithm~\ref{algo2} is $\mathcal{O}(m)$.

\begin{algorithm}[htbp]
\footnotesize
\LinesNumbered
\SetKwInOut{Input}{input}\SetKwInOut{Output}{output}
\Input{Number of machines ($m$), working power of machines ($\mu(c_i)$), idle 
power of machines ($\gamma(c_i)$), sum of idle 
power of all the machines ($\Gamma$), speed of machines ($\upsilon(c_i)$), 
total work to be done ($W$)}
\Output{Number of working machines ($r$), makespan ($T$)}
\BlankLine
\For{$i = 1$ to $m$}{
calculate $\frac{\mu(c_i) - \gamma(c_i) + \Gamma}{\upsilon(c_i)}$
}
$\frac{\mu(c_1) - \gamma(c_1) + \Gamma}{\upsilon(c_1)} \leftarrow 
min(\frac{\mu(c_i) - \gamma(c_i) 
+ \Gamma}{\upsilon(c_i)})$ \;
\For{$i = 2$ to $m$}{
calculate $\frac{\mu(c_i) - \gamma(c_i)}{\upsilon(c_i)}$
}
sort $(\mu(c_i) - \gamma(c_i))$\;
\For{$r = 1$ to $m$}{
$E(r) \leftarrow [\frac{\sum_{i = 1}^{r}(\mu(c_i)-\gamma(c_i)) + 
\Gamma}{\sum_{i 
=1}^{r}\upsilon(c_i)}]W$\;
\eIf{$\frac{\mu(c_{r+1})-\gamma(c_{r+1})}{\upsilon(c_{r+1})}<E(r)$}
{$r \leftarrow r+1$\;
return $r$\;
}{stop\;}
}
$T \leftarrow \frac{W}{r}$\;
$w(c_i) \leftarrow T*\upsilon(c_i)$\;
\caption{Find the working machines and makespan of the system when 
the speeds of machines are different}
\label{algo2}
\end{algorithm}

We now find the complexity of scheduling problems and 
give algorithms for divisible and non-divisible jobs.
 
\subsubsection{Divisible Jobs}
\label{subsub:diff_div}
In this case, jobs can be arbitrarily broken into any number of
smaller jobs and these smaller jobs can be executed in parallel on
different machines.  Also we consider the energy consumption overhead
due to division/breaking of jobs overhead to be nil. In this scenario
the jobs can be trivially distributed over $r$ machines such that
machine $c_i$ gets $w(c_i)$ amount of work (as specified by
Algorithm~\ref{algo2}). The energy consumption of system with
divisible job can be given by:
\begin{equation}
E = \sum_{i = 1}^{m}\left [\mu(c_i)\frac{w(c_i)}{\upsilon(c_i)} + 
\gamma(c_i)(T-\frac{w(c_i)}{\upsilon(c_i)})\right ]
\label{eq:en_diff_div}
\end{equation}

Clearly the scheduling of divisible jobs can be done by using
Algorithm~\ref{algo2} which is a linear time algorithm.
\begin{remark}
Energy-minimal scheduling of divisible jobs on a system of machines
running at different speeds can be done in linear time.
\end{remark}

Many might argue that many systems do not have the luxury to divide
jobs arbitrarily, but it must be noted that efficiency in a system
with divisible jobs can be seen as upper bound to the achievable
efficiency. We proceed to analyse the case in which jobs are
non-divisible in the next section.

\subsubsection{Non-divisible Jobs}
\label{subsub:diff_nd}
In this section we first prove that scheduling non-divisible jobs for
energy optimality on a system with machines having different speeds is
an NP-hard problem.  We then give an approximation algorithm for the
same.

\begin{proposition} 
In a system where machines have different speeds and jobs are
non-divisible, computing energy-minimal schedule in that case is an
NP-hard problem.
\end{proposition}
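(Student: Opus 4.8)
The plan is to establish NP-hardness by reducing from the identical-speed non-divisible problem, which has already been shown NP-hard in Proposition~\ref{thmNPIdentical}. The guiding observation is that the identical-speed model is nothing but the special case of the different-speed model in which every machine speed is fixed to unity; consequently the different-speed problem can only be harder, and the hardness is inherited automatically. I would therefore present the argument as a polynomial-time (in fact linear-time) many-one reduction that embeds every identical-speed instance into the different-speed setting.

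Concretely, I would take an arbitrary instance of the identical-speed non-divisible problem---a collection of machines with powers $\mu(c_i),\gamma(c_i)$ together with non-divisible jobs of weights $\psi(p_j)$---and map it to a different-speed instance by leaving all powers and weights untouched and setting $\upsilon(c_i)=1$ for every $i$. The next step is to check that this embedding preserves the objective exactly. Under the substitution $\upsilon(c_i)=1$ we have $w(c_i)/\upsilon(c_i)=w(c_i)=\tau(c_i)$, so the general energy expression~\eqref{label:eVarSpeed} collapses term-by-term into~\eqref{label:energy}; moreover the makespan definitions~\eqref{label:tmax} and~\eqref{eq:nonzero_ms} agree, both equalling $\max_i\tau(c_i)$. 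Thus a schedule attains energy value $x$ in the constructed different-speed instance if and only if the identical schedule attains energy $x$ in the original instance, and feasibility (an assignment of whole jobs to machines) is trivially preserved since the job set and machine set are unchanged.

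Putting these together, I would conclude that any algorithm deciding whether the different-speed instance admits a schedule of energy at most a given threshold simultaneously decides the same question for the identical-speed instance. Because the latter decision problem is NP-hard by Proposition~\ref{thmNPIdentical}, the former is NP-hard as well, which is exactly the claim.

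Honestly, there is no genuine obstacle in this proof: the difficulty is entirely inherited, and the only point requiring care is verifying that the speed-one embedding preserves the \emph{optimal} energy and not merely feasibility. Since the two objective functions are literally identical once $\upsilon(c_i)=1$ is substituted, their minimizers coincide and this verification is immediate. If a more self-contained argument were preferred, one could instead mirror the identical-speed proof and reduce directly from the subset-sum (equivalently partition) problem, exploiting that equalizing the working times of the chosen machines---forced by Lemma~\ref{lemma:same_time}---amounts to partitioning the job weights into equal-sum groups; but the special-case reduction above is the cleaner route and is the one I would adopt.
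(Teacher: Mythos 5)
Your proof is correct, but it takes a genuinely different route from the paper. You establish hardness by restriction: every identical-speed instance embeds into the different-speed model by setting $\upsilon(c_i)=1$, under which the objective~\eqref{label:eVarSpeed} collapses term-by-term to~\eqref{label:energy}, so NP-hardness is inherited from Proposition~\ref{thmNPIdentical} via a trivial many-one reduction. The paper instead argues directly within the different-speed setting: it takes the work quotas $w(c_i)$ produced by Algorithm~\ref{algo2} and observes that an energy-minimal schedule requires partitioning the job weights into exclusive subsets $P_i$ with $\sum_{j \in P_i}\psi(p_j) = w(c_i)$, which it identifies as subset-sum, mirroring the structure of the identical-speed argument. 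Comparing the two: your special-case reduction is the formally cleaner one, since the paper's phrasing (``our problem is reduced to subset-sum'') runs the reduction in the direction that would bound the problem from above rather than below --- the rigorous version of the paper's argument would need to reduce subset-sum (or partition) \emph{to} the scheduling problem, a gap your approach sidesteps entirely. What the paper's route buys is that it exhibits the combinatorial core of the difficulty (hitting prescribed per-machine work quotas with indivisible jobs) directly in the heterogeneous-speed setting, whereas your argument is parasitic on the identical-speed proposition and is only as strong as that earlier result; your closing remark that one could alternatively reduce from subset-sum directly, using Lemma~\ref{lemma:same_time} to force equal working times, is essentially the self-contained repair of the paper's own sketch. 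One minor point worth noting in your write-up: the machine-indexing convention of Assumption~\ref{assum:order} differs syntactically between the two models, but with $\upsilon(c_i)=1$ the different-speed ordering reduces to the identical-speed one (the constant $\Gamma$ does not affect the argmin), so the embedding is indeed exact.
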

\begin{proof}
The amount of work $w(c_i)$ that has to be assigned to machine $c_i$ is
given by Algorithm~\ref{algo2}. Now we need to make exclusive subsets
$P_i$ from set $\mathcal{P}$ such that the sum of work in set $P_i$ is
$w(c_i)$, i.e.,
\begin{equation}
\sum_{j \in P_i} \psi(p_{j}) = w(c_i)  \quad \forall i; 1\leq i \leq r 
\label{eq:subset11}
\end{equation}
As also discussed with Theorem~\ref{thmNPIdentical}, this is in the
form of the subset-sum problem which is NP-hard~\cite{Johnson1973}. QED.
\end{proof}

Since finding the energy-minimal schedule is NP-hard, we develop an
approximation algorithm to find energy efficient schedules. It is possible that 
due to size of the jobs, the ideal
makespan specified by Algorithm~\ref{algo2} cannot be achieved. In the next 
lemma we specify the best achievable makespan.

\begin{lemma}
Given a system of machines with different speed and a set $\mathcal{P}$
of jobs with longest job of length $\psi(p_{max})$ and the speed of the
fastest machine is given by $\upsilon_{max}$, the best possible
achievable makespan is given by:
\begin{equation}
T_o = \max(T,\frac{\psi_{max}}{\upsilon_{max}})
\label{eq:idealT0var}
\end{equation}
where $T$ is given by Algorithm~\ref{algo2}. 
\end{lemma}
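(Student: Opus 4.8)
The plan is to mirror the proof of the analogous identical-speed lemma and establish $T_o$ as the smallest makespan that simultaneously respects the indivisibility of the largest job and the energy-optimality of the divisible schedule. The claim reduces to two independent lower bounds on the makespan of any energy-minimal non-divisible schedule, after which the result is just the observation that their maximum is the best target one can set.

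First I would argue the lower bound coming from the largest job. Since jobs are non-divisible, the job of weight $\psi_{max}$ must be placed in its entirety on a single machine. If that machine has speed $\upsilon(c_i)$, then by~\eqref{label:w} it is occupied for time $\psi_{max}/\upsilon(c_i)$, a quantity minimized precisely when $\upsilon(c_i)=\upsilon_{max}$. Thus whichever machine carries this job has working time at least $\psi_{max}/\upsilon_{max}$, and since the makespan is the maximum working time over all machines by~\eqref{eq:nonzero_ms}, every schedule satisfies $T\geq\psi_{max}/\upsilon_{max}$.

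Second I would invoke the divisible-case results of Section~\ref{subsec:var_sp}. By Lemma~\ref{lemma:same_time}, the energy-minimal working set $\mathcal{R}$ selected by Algorithm~\ref{algo2} runs all its machines for the common time $T = W/\sum_{i\in\mathcal{R}}\upsilon(c_i)$; because the combined throughput of this set is exactly $\sum_{i\in\mathcal{R}}\upsilon(c_i)$, the total work $W$ cannot be completed in less time with these machines. Driving the makespan strictly below $T$ would therefore force enlarging the working set to raise the combined speed, but Theorems~\ref{thm2} and~\ref{thm:varSpeed2} show that the set chosen by Algorithm~\ref{algo2} already minimizes the energy~\eqref{eq:EVarTemp5}, so any such enlargement strictly increases energy consumption. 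Hence no energy-minimal schedule can have makespan below $T$.

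Combining the two bounds, any energy-minimal non-divisible schedule must have makespan at least $\max(T,\psi_{max}/\upsilon_{max})$, and this is the best achievable target, giving $T_o=\max(T,\psi_{max}/\upsilon_{max})$. The main obstacle I anticipate is making the second bound precise rather than routine: the first bound is a hard combinatorial constraint on \emph{every} schedule, whereas the second is an optimality statement that holds only for \emph{energy-minimal} schedules and rests delicately on the earlier machine-selection results. The care needed is to phrase it correctly—a smaller makespan is combinatorially possible (by switching on extra machines), it is merely incompatible with energy minimality—so the lemma must be read as characterizing the best makespan attainable \emph{without sacrificing energy optimality}.
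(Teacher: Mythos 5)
Your proposal is correct and follows essentially the same route as the paper's own proof: a lower bound of $\psi_{max}/\upsilon_{max}$ forced by the indivisibility of the largest job, a lower bound of $T$ from energy optimality of the divisible-case schedule, and the maximum of the two as $T_o$. Your second bound is in fact argued more carefully than the paper's one-line assertion that ``any schedule with makespan lower than $T$ should be suboptimal,'' since you correctly pin down that a smaller makespan is combinatorially feasible but incompatible with energy minimality via Lemma~\ref{lemma:same_time} and Theorems~\ref{thm2} and~\ref{thm:varSpeed2}.
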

\begin{proof}
If there is a job which cannot be completed within $T$, then irrespective
of the arrangement of jobs, we increase the best achievable
makespan $T_o$ to accommodate that biggest job. The biggest job cannot
be completed any earlier than
$\frac{\psi_{max}}{\upsilon_{max}}$. Also, any schedule with
makespan lower than $T$ should be suboptimal. Hence,
\begin{equation*}
T_o = \max(T,\frac{\psi_{max}}{\upsilon_{max}})
\label{eq:tt3}
\end{equation*}
QED.
\end{proof}

\begin{algorithm}[htb]
\footnotesize
\LinesNumbered
\SetKwInOut{Input}{input}\SetKwInOut{Output}{output}
\Input{Number of machines ($m$), working power of machines ($\mu(c_i)$), idle 
power of machines ($\gamma(c_i)$), speed of machines ($\upsilon(c_i)$), total 
work to be done ($W$), $\mathcal{P}$, $\psi(p_j)$, $r_o$, $T_o$}
\Output{Makespan ($T$), energy ($E$)}
\BlankLine
\For{$i = 1$ to $m$}{
$A(i) \leftarrow  \frac{\mu(c_i) - \gamma(c_i)}{\upsilon(c_i)}$ \;
}
sort $A(i)$ in non-decreasing order\;
\For{$j = 1$ to $n$}{
$B(j) \leftarrow \psi(p_j)$ \;
}
sort $B(j)$ in non-increasing order \;
$temp_i \leftarrow 0$ \;
\For{$j=1$ to $n$}{
\For{$i=1$ to $r_o$}{
$temp_i \leftarrow \frac{\psi(p_j)}{\upsilon_i} + temp_i $ \;
}
assign $p_j$ to $c_i$, where $c_i$ has $ \min(temp_i)$ \;
}
$T \leftarrow \max \tau(c_i)$\;
$E \leftarrow \sum_{i = 1}^{m}[\mu(c_i)\frac{w(c_i)}{\upsilon(c_i)} + 
\gamma(c_i)(T-\frac{w(c_i)}{\upsilon(c_i)})]$\;
\caption{Approximation algorithm for energy-efficient scheduling of
  the system when the speeds of machines are different}
\label{algo4}
\end{algorithm}

If the makespan is increased then there can be possible reduction in
number of working machines. The ideal number of working machines $r_o$
is given by the minimum number of machines which satisfy,
\begin{equation}
\frac{W}{\sum_{i=1}^{r_o}\upsilon(c_i)} \geq T_o
\label{eq:idealR0var}
\end{equation}

Using this $r_o$ as input, we present Algorithm~\ref{algo4} which
tries to distribute jobs amongst machines with different speed such that
the machines would work for an equal amount of time.

In Algorithm~\ref{algo4}, we first index the machines in non-decreasing
order of the ratio of difference of working power and idle power, and speed of 
the machines. We also index the jobs in
non-increasing order of their weights. Now, we take a job at a time
and find that on which machine it will finish first, based on the
speed of the machine. Then we assign the job to that particular
machine.  Similarly the process is repeated till all the jobs are
assigned to some machines. Now, the working time of all machines is
calculated, and the maximum working time of all machines is the
makespan.

The approximation algorithm may not achieve the exact $T_o$, but it
is within certain bounds.

The bound for the makespan given by~\cite{gonzalez1977} of such LPT
(largest processing time) schedule is
\begin{equation}
 \frac{2m}{m+1} 
 \label{eq:gb}
\end{equation}

Given this bound on makespan, we derive the bound on energy consumption
of our algorithm.

\begin{theorem}
When the speeds as well as the power specifications of the machines
are different, and when jobs are non-divisible, then the maximum possible ratio 
of energy consumption using Algorithm~\ref{algo4} and
ideal energy consumption is given by:
\begin{equation}
\frac{{E^*}_{max}}{E_o} \leq \frac{2r_o}{r_o + 1}
\label{eq:tt8}
\end{equation}
\end{theorem}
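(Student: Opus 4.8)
The plan is to mirror the structure of the identical-speed bound (the $\frac{4}{3}$ theorem), replacing Graham's makespan bound by the Gonzalez bound~\eqref{eq:gb} and absorbing the extra bookkeeping that different speeds introduce through work conservation. First I would record the two quantities to be compared. By Lemma~\ref{lemma:same_time} the ideal schedule runs all $r_o$ working machines for the common time $T_o$, so from~\eqref{eq:EVarTempp},
\begin{equation*}
E_o = T_o\left[\sum_{i=1}^{r_o}(\mu(c_i)-\gamma(c_i)) + \Gamma\right],
\end{equation*}
while the schedule produced by Algorithm~\ref{algo4} has makespan $T^*$ and working times $\tau^*(c_i)$, giving $E^* = \sum_{i=1}^{r_o}(\mu(c_i)-\gamma(c_i))\tau^*(c_i) + \Gamma T^*$. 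Both schedules complete the same total work, so $\sum_{i=1}^{r_o}\tau^*(c_i)\upsilon(c_i) = W = T_o\sum_{i=1}^{r_o}\upsilon(c_i)$.

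Next I would bound the working-energy term $\sum_{i=1}^{r_o}(\mu(c_i)-\gamma(c_i))\tau^*(c_i)$ by its ideal value $T_o\sum_{i=1}^{r_o}(\mu(c_i)-\gamma(c_i))$, exactly as in the $\frac{4}{3}$ proof. As there, it suffices to treat an imbalance between two machines $c_j,c_k$ with $j<k$, the general case following by summing such perturbations. Unlike the identical-speed case, \emph{work} rather than time is conserved, so writing $\tau^*(c_j)=T_o+\epsilon$ forces $\tau^*(c_k)=T_o-\epsilon\,\upsilon(c_j)/\upsilon(c_k)$. Substituting, the deviation of the working-energy term from its ideal value collapses to
\begin{equation*}
\epsilon\,\upsilon(c_j)\left[\frac{\mu(c_j)-\gamma(c_j)}{\upsilon(c_j)} - \frac{\mu(c_k)-\gamma(c_k)}{\upsilon(c_k)}\right],
\end{equation*}
which is at most $0$ because Assumption~\ref{assum:order} orders the machines by non-decreasing $\frac{\mu(c_i)-\gamma(c_i)}{\upsilon(c_i)}$ and $j<k$. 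Hence the ratio $E^*/E_o$ is largest when this term attains its ideal value, i.e.\ ${E^*}_{max} = T_o\sum_{i=1}^{r_o}(\mu(c_i)-\gamma(c_i)) + \Gamma T^*$.

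Finally I would invoke the Gonzalez makespan bound~\eqref{eq:gb} with $r_o$ working machines, $T^* \leq \frac{2r_o}{r_o+1}T_o$, substitute into ${E^*}_{max}$, and simplify to
\begin{equation*}
\frac{{E^*}_{max}}{E_o} \leq \frac{\sum_{i=1}^{r_o}(\mu(c_i)-\gamma(c_i)) + \frac{2r_o}{r_o+1}\Gamma}{\sum_{i=1}^{r_o}(\mu(c_i)-\gamma(c_i)) + \Gamma}.
\end{equation*}
Since working power never falls below idle power, $A := \sum_{i=1}^{r_o}(\mu(c_i)-\gamma(c_i)) \geq 0$, and a fraction of the form $\frac{A+\rho\Gamma}{A+\Gamma}$ with $\rho=\frac{2r_o}{r_o+1}\geq 1$ and $A\geq 0$ is at most $\rho$; this yields the claimed $\frac{{E^*}_{max}}{E_o} \leq \frac{2r_o}{r_o+1}$.

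I expect the main obstacle to be the perturbation step. Because different speeds make work rather than time the conserved quantity, the two-machine swap is asymmetric in time, and one must check that the resulting working-energy change reduces precisely to the signed per-unit-speed difference of $c_j$ and $c_k$. This is exactly the quantity the machine ordering controls, so the sign works out in our favor, but it is the point at which the different-speed argument genuinely departs from the identical-speed one rather than copying it verbatim; everything after it is routine algebra together with the cited makespan bound.
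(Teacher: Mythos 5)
Your overall architecture (split $E^*$ into a working-energy term plus $\Gamma T^*$, bound each, then apply the Gonzalez LPT bound) is sensible, and your final inequality is the true statement, but the perturbation step you identify as the crux contains a genuine gap. You bound the working-energy term $\sum_{i=1}^{r_o}(\mu(c_i)-\gamma(c_i))\tau^*(c_i)$ by its ideal value $T_o\sum_{i=1}^{r_o}(\mu(c_i)-\gamma(c_i))$ by considering a work-conserving transfer between $c_j$ and $c_k$ with $j<k$ in which the \emph{lower-index} machine is overloaded ($\epsilon\ge 0$). Algorithm~\ref{algo4} gives no such guarantee: it assigns each job to the machine on which it finishes earliest, so which machines end up above $T_o$ depends on speeds and job sizes, not on the index order by $\frac{\mu(c_i)-\gamma(c_i)}{\upsilon(c_i)}$. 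If a higher-index machine is overloaded instead --- shift work $\delta>0$ from $c_j$ to $c_k$ with $j<k$ --- your own formula gives a deviation $\delta\bigl[\frac{\mu(c_k)-\gamma(c_k)}{\upsilon(c_k)}-\frac{\mu(c_j)-\gamma(c_j)}{\upsilon(c_j)}\bigr]\ge 0$, so the working-energy term can strictly exceed its ideal value. Equivalently, writing the term as $\sum_i \frac{\mu(c_i)-\gamma(c_i)}{\upsilon(c_i)}\,w(c_i)$ with $\sum_i w(c_i)=W$ fixed, work conservation alone does not pin it at the equal-time allocation. The companion claim that the general case ``follows by summing such perturbations'' fails for the same reason: arbitrary work-conserving deviations decompose into pairwise transfers of both signs, not only the favorable one.

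The theorem survives, and the repair is exactly the paper's (much shorter) proof, which avoids the perturbation entirely: since $\mu(c_i)\ge\gamma(c_i)$ and $\tau^*(c_i)\le T^*$ for every working machine, one has $E^*\le T^*\bigl[\sum_{i=1}^{r_o}(\mu(c_i)-\gamma(c_i))+\Gamma\bigr]$ directly; the same bracket appears in $E_o=T_o\bigl[\sum_{i=1}^{r_o}(\mu(c_i)-\gamma(c_i))+\Gamma\bigr]$ and cancels, so the energy ratio collapses to the makespan ratio $T^*/T_o$, which~\eqref{eq:gb} bounds by $\frac{2r_o}{r_o+1}$. It is worth noting that your intermediate claim, had it been valid, would have yielded the strictly tighter bound $1+\frac{(\frac{2r_o}{r_o+1}-1)\Gamma}{\sum_{i=1}^{r_o}(\mu(c_i)-\gamma(c_i))+\Gamma}$, mirroring~\eqref{eq:b1} for the identical-speed case; the reason the paper settles for the weaker $\frac{2r_o}{r_o+1}$ here is precisely that the ordering argument used in the identical-speed proof does not transfer once work, rather than time, is the conserved quantity --- the very point you flagged as the likely obstacle, but resolved with the wrong sign.
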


\begin{proof}
The energy consumption for ideal schedule is given by:
\begin{equation}
E_o = T_o(\sum_{i = 1}^{r_o}\mu(c_i) + \Gamma)
\label{eq:tt9}
\end{equation} 
Since we do not know which machines are working for time more than
$T_o$ and which ones less than $T_o$, for the upper bound we assume
that all machines are working for ${T^*}_{max}$. Hence,
\begin{equation}
{E^*}_{max} = {T^*}_{max}(\sum_{i = 1}^{r_o}\mu(c_i) + \Gamma)
\label{eq:tt10}
\end{equation}
Clearly,
\begin{equation}
\frac{{E^*}_{max}}{E_o} = \frac{{T^*}_{max}(\sum_{i = 1}^{r_o}\mu(c_i) + 
\Gamma)}{T_o(\sum_{i = 1}^{r_o}\mu(c_i) + \Gamma)}
\label{eq:tt11}
\end{equation}
\begin{equation}
\frac{{E^*}_{max}}{E_o} = \frac{{T^*}_{max}}{T_o}
\label{eq:tt12}
\end{equation}

By~\eqref{eq:gb},
\begin{equation}
\frac{{T^*}_{max}}{T_o} \leq \frac{2r_o}{r_o + 1}
\label{eq:tt13}
\end{equation}
From~\eqref{eq:tt12} and~\eqref{eq:tt13}, we get,
\begin{equation}
\frac{{E^*}_{max}}{E_o} \leq \frac{2r_o}{r_o + 1}
\label{eq:tt14}
\end{equation}
QED.
\end{proof}

This bound increases with number of machines in system and reaches $2$ 
asymptotically. 
Annam{\'a}ria Kov{\'a}cs~\cite{{kovacs2010new}} provided a tighter bound for LPT
makespan that is $1+\sqrt{3}/3 \approx 1.5773$.
\eat{Dobson~\cite{Dobson84} provided a tighter bound for LPT
makespan that is $\frac{19}{12}$. }
As the bound on energy depends on makespan in our algorithm, hence the worst 
case bound for our algorithm
is also $1+\sqrt{3}/3 \approx 1.5773$.

\section{Energy Efficiency: Incompatible Measures}
\label{sec:im}
There can be two measures for assessing the energy efficiency of a
system of machines: one, to consider the total energy consumed by
the system to complete some work; and the second, to consider the
fraction of the energy consumed by machines in the system to do work
over the total energy consumed in the system.

The first measure gives a sense of the energy required by the system
per unit work produced, and the second measure indicates how much of
the energy consumed by the system goes into work, and how much is idle
(non-working) consumption by the system.  We hold that the
first measure, energy per unit work, is the more meaningful one (as it
can lead to lowered overall energy consumption while completing some
amount of work). 

Our analysis clearly indicates that these two measures of energy
efficiency are incompatible, in the sense that they cannot be
simultaneously optimized for arbitrary systems.  To see why, we may
consider~\eqref{eq:EVarTemp5} for the total energy of the system.
In~\eqref{eq:EVarTemp5}, if we put $ \gamma(c_i) = 0, \forall i, 1
\leq i \leq m$, then the working energy is given by:
\begin{equation}
 \sum_{i = 1}^{r} \mu(c_i) \Big[\frac{W}{\sum_{i = 1}^{r} \upsilon(c_i)}\Big]
\label{label:we}
\end{equation}

Dividing the above by~\eqref{eq:EVarTemp5}, we get the following for
the ratio of the working energy to total energy:
\begin{equation}
\frac{\sum_{i = 1}^{r} \mu(c_i)}{\sum_{i = 1}^{r} (\mu(c_i) - 
\gamma(c_i)) + \Gamma }
\label{label:wt1}
\end{equation}

This in turn simplifies to:
\begin{equation}
 \frac{\sum_{i = 1}^{r} \mu(c_i)}
 {\sum_{i = 1}^{r} \mu(c_i) + \sum_{i = {r+1}}^{m} \gamma(c_i) }
\label{eq:wt2}
\end{equation}

Considering \eqref{eq:wt2} shows that the ratio of the working energy
to total energy can only be optimized by increasing the value of $r$,
i.e., when $r=m$, but this violates Theorem~\ref{th4} which tells us
that for minimum total energy consumption, the value of $r$ may not be
necessarily be equal to $m$.

Thus, a system running in such a way as to consume the
least possible energy while completing some work will not always be
most efficient in terms of the second measure, as the energy
consumption by the idle machines can be significant.  Conversely, if we
seek to optimize the system performance by the second measure and
reduce the idle consumption of machines, then the overall energy
consumption to do the work is not always minimized.

The data center energy efficiency metric called Power Usage
Effectiveness (PUE)~\cite{avelar2012}, an industry standard recommended
by the U.S. Environmental Protection Agency under its Energy Star
program,\footnote{See 
https://www.energystar.gov/ia/partners/prod\_development/downloads/DataCenterRat
ing\_General.pdf.}
is an energy efficiency measure of the second kind, as it considers
only the ratio of the total energy to the working energy.  It is thus
not a surprise that the PUE comes with its share of controversy and
criticisms~\cite{brady2013}.

\end{document}